\documentclass[aps,prl,twocolumn,superscriptaddress,floatfix,nofootinbib,showpacs,longbibliography,groupedaddress]{revtex4-2}
\usepackage{comment}
\usepackage{amsmath,amssymb,amsthm,bm,amsfonts,mathrsfs,bbm}
\usepackage[colorlinks=true, citecolor=blue, urlcolor=blue]{hyperref}
\usepackage[caption = false]{subfig}
\usepackage{times}
\usepackage{xcolor,colortbl}
\usepackage{ragged2e}
\usepackage{tikz}
\usepackage{scalerel}
\usetikzlibrary{calc}
\usetikzlibrary {arrows.meta}
\usepackage{leftindex}
\usepackage{braket}

%#############################
\definecolor{abc}{rgb}{.5,0.,.5}
\definecolor{green1}{rgb}{0,.8,.3}

\newcommand{\tr}{\operatorname{Tr}}

\newtheorem{theorem}{Theorem}

\newtheorem{definition}{Definition}
\newtheorem{proposition}{Proposition}

%################################

\begin{document}

\title{Asymptotic Birkhoff-Violation in Operational Theories: Thermodynamic Implications and Information Processing}

\author{Ananya Chakraborty}
\affiliation{Department of Physics of Complex Systems, S. N. Bose National Center for Basic Sciences, Block JD, Sector III, Salt Lake, Kolkata 700106, India.}

\author{Sahil Gopalkrishna Naik}
\affiliation{Department of Physics of Complex Systems, S. N. Bose National Center for Basic Sciences, Block JD, Sector III, Salt Lake, Kolkata 700106, India.}

\author{Samrat Sen}
\affiliation{Department of Physics of Complex Systems, S. N. Bose National Center for Basic Sciences, Block JD, Sector III, Salt Lake, Kolkata 700106, India.}

\author{Ram Krishna Patra}
\affiliation{Department of Physics of Complex Systems, S. N. Bose National Center for Basic Sciences, Block JD, Sector III, Salt Lake, Kolkata 700106, India.}

\author{Pratik Ghosal}
\affiliation{Department of Physics of Complex Systems, S. N. Bose National Center for Basic Sciences, Block JD, Sector III, Salt Lake, Kolkata 700106, India.}

\author{Mir Alimuddin}
\affiliation{Department of Physics of Complex Systems, S. N. Bose National Center for Basic Sciences, Block JD, Sector III, Salt Lake, Kolkata 700106, India.}

\author{Manik Banik}
\affiliation{Department of Physics of Complex Systems, S. N. Bose National Center for Basic Sciences, Block JD, Sector III, Salt Lake, Kolkata 700106, India.}

\begin{abstract}
In accordance with the entropy principle of thermodynamics, under spontaneous evolutions, physical systems always evolve towards states with equal or greater randomness. But, where does this randomness originate? Renowned Birkhoff–von Neumann theorem, often referred to as Birkhoff theorem, identifies source of this randomness to be the stochastic application of reversible operations on the system under study, thereby ensuring its epistemic origin. Analogue of this theorem is known to fail in the quantum case. Here, we extend this investigation beyond quantum mechanics to a broader class of operational theories described within the framework of general probabilistic theories (GPTs). In this generalized framework, we establish Birkhoff-violation as the prevalent trait; in fact the the asymptotic variant of the theorem gets violated. We then demonstrate that Birkhoff-violation in GPTs can lead to consequences that are atypical to quantum theory. For instance, we report manifestation of Birkhoff-violation in a communication task, which otherwise is not observed in quantum world. We also show that, unlike the quantum case, in other operational theories the state transformation criteria can be distinct under mixtures of reversible transformations and doubly stochastic evolutions, leading to different resource theories of purity. Despite these exotic implications, we analyze how to define a coherent notion of entropy in this generalized framework, while upholding alignment with von Neumann's thought experiment. 
\end{abstract}

\maketitle
{\it Introduction.--} Mathematical modeling of physical phenomena involves systematic assignment of state descriptions to the system under study, followed by formulation of equations that govern evolution of those states over time. State of a classical system, for instance, is represented by a probability vector, with state-space forming a \(d\)-simplex for the system having \(d\) distinct states \cite{Self1}. In accordance with the $2^{nd}$ law of thermodynamics, under spontaneous evolutions, physical systems always evolve toward states with equal or greater entropy. Such evolutions, on the \(d\)-level classical system, are described by \(d \times d\) doubly stochastic matrices (also called bistochastic matrices), where all matrix elements are non-negative with  elements in each row and each column adding up to unity \cite{Marshall2011}. Under bistochastic evolutions, the randomness of the system can never be decreased -- it remains same when the system undergoes reversible transformations, otherwise it increases. But, where does this randomness stem from? Does it arise due to our lack of knowledge (i.e., epistemic origin), or is it an intrinsic property of the systems (i.e., ontic origin)? The seminal Birkhoff–von Neumann theorem \cite{Birkhoff1946, Neumann1953}, for classical systems, offers a simple explanation by asserting that such evolutions can always be realized through probabilistic mixtures of reversible state transformations. Origin of this randomness, therefore, can be traced back to the stochastic application of reversible transformations, thereby ensuring its epistemic nature. Subsequently, various possible generalizations of this theorem have been investigated for discrete and continuum infinite cases \cite{Isbell1955, Losert1982, Punescu2017}. However, such an explanation fails in the quantum world: quantum systems, with the single exception of qubits (two-level quantum systems), can undergo unital evolutions — the quantum analogue of bistochastic evolution — that cannot be realized through the stochastic application of unitary evolutions, the quantum analogue of reversible evolution \cite{Landau1993, Mendl2009, Ivan2011}. As conjectured by the authors in Ref.\cite{Smolin2005}, it was quite tempting to expect that Birkhoff–von Neumann explanation be restored in quantum world while considering the asymptotic scenario. But, it has been shown that quantum systems exhibit Birkhoff-violation even in the asymptotic setup \cite{Haagerup2011, Haagerup2015}. This strongly refutes an epistemic explanation for the increase in randomness when a quantum system undergoes spontaneous evolution.

Classical and quantum theories are two particular instances of a broader class of operational theories, often studied within the mathematical framework of general probabilistic theories (GPTs) (see the recent review \cite{Plvala2023} and references therein). Naturally, this raises the question: is the violation of Birkhoff’s theorem solely a quantum phenomenon, or is it prevalent in other GPTs? Beyond its mathematical intrigue, addressing this question is crucial from physics point of view, as a comprehensive theory of quantum gravity might necessitate extending quantum theory \cite{Weinstein2024}. In this work, we establish that the violation of Birkhoff’s theorem is a prevalent phenomenon within the GPT framework. Specifically, we present a class of non-quantum GPTs that exhibit violations of this theorem. Additionally, we demonstrate that this violation holds in asymptotic setup as well. To illustrate this, we start by providing a detailed analysis of a simple GPT model, which has a square-shaped state-space. This model is of particular interest as it describes the marginal part of the extensively studied bipartite Popescu-Rohrlich (PR) theory \cite{Popescu1994} (see also \cite{Popescu2014}). Despite its simple description, the Birkhoff-violation in square-bit theory yields rich consequences, distinct from those observed in the quantum realm. For instance, unlike quantum theory, the violation of Birkhoff theorem in square-bit theory manifests operational signatures in communication tasks, as exemplified through the task of random access code \cite{Wiesner1983,Ambainis2002}. On the other hand, it is widely recognized that preparing a physical system in its pure states entails nontrivial costs \cite{Landauer1961,Bennett1982,Faist2018,Guryanova2020}. Indeed, one can extract work from a thermal bath if the system is initialized in pure states, effectively making them the fuel of an engine \cite{Szilard1929,Bennett1993,Maruyama2009}. Within the quantum setup, this motivates study of the `resource theory of purity' \cite{Horodecki2003(1),Horodecki2003(2)}, where completely mixed states are considered to be free, and convex mixtures of unitaries are permitted as free operations. Although the set of unital operations is strictly larger than convex mixtures of unitaries for qutrit and beyond, the criteria for state transformation under these two sets of operations coincide \cite{Gour2015, Chiribella2017}. However, we demonstrate that this is not the case in the square-bit model. Consequently, Birkhoff-violation in this scenario gives rise to distinct resource theories of purity. In addition, we discuss several other toy GPT models that exhibit violations of asymptotic Birkhoff theorem, indicating similar exotic implications. Despite these atypical consequences of Birkhoff-violation, we show that a consistent notion of entropy can be defined in this generalized framework while keeping alignment with von Neumann’s thought experiment \cite{vonNeumann1955}. Notably, our definition of entropy reduces to Shannon entropy for classical systems and to von Neumann entropy in quantum case.

{\it Framework of GPT.--} This mathematical framework encompasses all operational theories that employ the concept of states to determine the outcomes probabilities of measurements performed on the system. The origin of this framework dates back to the early 1960s, with primary goal of providing an axiomatic derivation of quantum mechanics \cite{Mackey1963,Ludwig1967,Davies1970,Mielnik1974}. Recently, motivated by research in quantum information theory, it has garnered significant interest \cite{Hardy2001,Barrett2007,Chiribella2011,Barnum2011,Masanes2011}. A system in a GPT is specified by the triple $(\mathbf{S},\mathbf{E},\mathbf{T})$, respectively denoting the allowed sets of normalized states, effects, and transformations. Physical requirements often impose specific mathematical structures on these sets, leading to an elegant language for analyzing physical phenomena (see Appendix-{\bf A}). While the description of classical and quantum theories fits perfectly within this framework, it, however, encompasses a broad spectrum of other models  \cite{Plvala2023}. Here, we briefly revisit a toy model having a very simple mathematical description, yet offering rich operational insights \cite{Janotta2011,Muller2012,Janotta2013,Massar2014,Banik2019,Saha2020,Bhattacharya2020}. 

{\it Square-bit model.--} From operational point of view,  states in a GPT are completely specified by outcome statistics of a finite set of measurements called the `fiducial' set \cite{Hardy2001}. For instance, the state space of a $d$-level classical system is a $d$-simplex, with the fiducial set consisting of a single $d$-outcome measurement. In contrast, the three Pauli measurements constitute a fiducial set for the two-level quantum system (qubit). On the other hand, for square-bit, fiducial set consists of two dichotomic measurements $M_{13}\equiv\{e_1,e_3\}$ and $M_{24}\equiv\{e_2,e_4\}$, which are incompatible in the sense that their outcome cannot be simultaneously obtained from a single joint measurement. A generic state $\omega\in\mathbf{S}_{\bm\square}$ is thus specified as $\omega\equiv(p,1-p|q,1-q)^\intercal$, where $p:=e_1(\omega)$ and $q:=e_2(\omega)$. While probabilities of Pauli measurements on qubit are fundamentally constrained through uncertainty relation \cite{Maassen1988}, in case of square-bit no such restriction is imposed, and thus implying the range $p,q\in[0,1]$. The state $\omega$ can be represented as a vector $\omega\equiv(2p-1,2q-1,1)^\intercal\in\mathbb{R}^3$. Accordingly, the state-space $\mathbf{S}_{\bm\square}$ turns out to be the convex hull of four extreme points, and the effect space $\mathbf{E}_{\bm\square}$ turns out to be the convex hull of six extreme effects (see Fig. \ref{fig1}). Equal mixture of the extreme states yields the completely mixed state $\omega_{m}=1/4\sum_{i=1}^4\omega_i$. In this vectorized representation, the outcome probability reads as $e(\omega)=e\cdot\omega$, the Euclidean dot product in $\mathbb{R}^3$. The state-space ${\bf S}_{\bm\square}$ is the simplest case of a general class of models, where the state-space ${\bf S}_{2k}$ is specified by the convex hull of its $2k$ extreme points, with $k\in\{2,3,\cdots\}$ \cite{Janotta2011}. Geometrically ${\bf S}_{2k}$'s look like the symmetric even sided gons.
\begin{figure}[t!]
\centering
\includegraphics[scale=0.5]{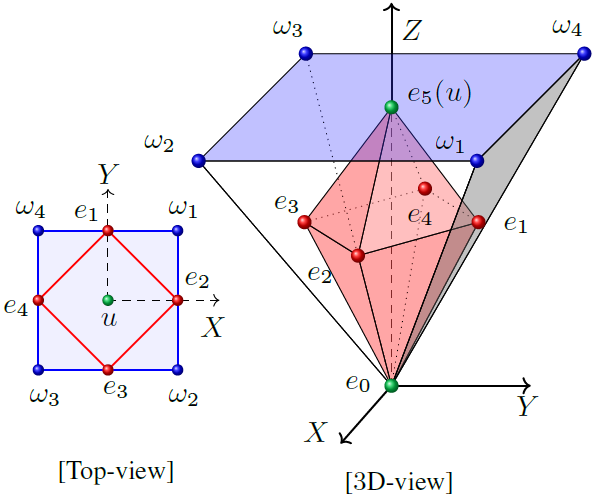}   
\caption{(Color online) Square-bit model. Normalized state-space $\mathbf{S}_{\bm\square}$ is the convex hull of extreme states $\omega_1:=(1,1,1)^\intercal$, $\omega_2:=(1,-1,1)^\intercal$, $\omega_3:=(-1,-1,1)^\intercal$, and $\omega_4:=(-1,1,1)^\intercal$. Effect Space $\mathbf{E}_{\bm\square}$ is the convex hull of extreme effects $e_0:=(0,0,0)^\intercal$, $e_1:=\frac{1}{2}(0,1,1)^\intercal$, $e_2:=\frac{1}{2}(1,0,1)^\intercal$, $e_3:=\frac{1}{2}(0,-1,1)^\intercal$, $e_4:=\frac{1}{2}(-1,0,1)^\intercal$, and $e_5:=(0,0,1)^\intercal=u$ \cite{Self2}. Unit effect $u$ determines the normalization: $u(\omega)=1,~\forall\omega\in\mathbf{S}_{\bm\square}$.}
\vspace{-.5cm}
\label{fig1}
\end{figure}

{\it Birkhoff-Violation in Non-Quantum GPTs.--} A transformation $T\in\mathbf{T}$, often called a channel, maps normalized states to normalized states. The requirement that a channel should preserve convexity, i.e., $T(\sum\alpha_k\omega_k)=\sum\alpha_kT(\omega_k),~\forall~\omega_k\in\mathbf{S},~\alpha_k\ge0,~\&~\sum_k\alpha_k=1$, along with the demand $T(\Vec{0})=\Vec{0}$ ensures $T$ to be a linear operator on the $\mathbb{V}$ wherein the state-space is embededed \cite{Barrett2007}. Accordingly, a map $T\in\mathbf{T}_{\bm\square}$ on square-bit can be represented as a $3\times 3$ matrix $\scriptsize{\begingroup \setlength\arraycolsep{1pt}\begin{pmatrix}r_1 & r_2 & r_3\\[-.1cm]s_1 & s_2 & s_3\\[-.1cm]t_1 & t_2 & t_3\end{pmatrix}\endgroup}$, which for brevity will be denoted as $T\equiv[\Vec{r},\Vec{s},\Vec{t}]$. The normalization preserving demand is ensured by fixing $\Vec{t}=(0,0,1)$, and thus channels are specified by $[\Vec{r},\Vec{s}]$. Note that, if $T_1$ and $T_2$ are two allowed channels, then there convex mixture is also an allowed channel, implying $\mathbf{T}_{\bm\square}$ being a convex set. Accordingly, $\mathbf{T}_{\bm\square}$ is characterized by its extreme points as identified next.
\begin{proposition}\label{prop1}
$\mathbf{T}_{\bm\square}$ forms a convex polytope embedded in $\mathbb{R}^6$ having $36$ extreme points as listed in Table \ref{tab1}.\vspace{-.2cm}   
\end{proposition}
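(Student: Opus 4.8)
The plan is to characterize the polytope $\mathbf{T}_{\bm\square}$ by first determining which linear maps $T\equiv[\Vec{r},\Vec{s}]$ are admissible channels, and then enumerating the extreme points of the resulting feasible region. A map is an admissible channel precisely when it sends every normalized state to a normalized state; by linearity and convexity it suffices to demand $T(\omega_i)\in\mathbf{S}_{\bm\square}$ for the four extreme states $\omega_1,\dots,\omega_4$. Since $\mathbf{S}_{\bm\square}$ is a polytope, the condition $T(\omega_i)\in\mathbf{S}_{\bm\square}$ is itself a finite conjunction of linear inequalities, namely $e_j(T\omega_i)\ge 0$ for the extreme effects $e_j$ (equivalently, the four half-plane inequalities cutting out the square). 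Thus $\mathbf{T}_{\bm\square}$ is cut out by finitely many linear inequalities in the six parameters $(\Vec{r},\Vec{s})=(r_1,r_2,r_3,s_1,s_2,s_3)$, hence is a convex polytope in $\mathbb{R}^6$; this already establishes the first half of the proposition.

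Next I would pin down the extreme points. The key structural observation is that $T$ maps the square $\mathbf{S}_{\bm\square}$ into itself, and the images $T\omega_1,\dots,T\omega_4$ are the vertices of a (possibly degenerate) quadrilateral inscribed in the square that must be the image under the affine map. An extreme channel should be one for which this freedom is maximally constrained: I expect the extreme points to be exactly the affine maps that send each of the four vertices $\omega_i$ of the square to some vertex $\omega_{\sigma(i)}$ of the square, where the assignment $\sigma$ is realized by an affine map of $\mathbb{R}^2$. Counting these: the eight symmetries of the square (the dihedral group $D_4$) give the invertible vertex-to-vertex maps, while the remaining maps collapse the square onto an edge or a vertex. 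One then checks that these ``vertex-to-vertex'' affine maps are genuinely extreme (they saturate enough of the defining inequalities — six independent ones at a vertex of a polytope in $\mathbb{R}^6$) and that every extreme point is of this form (any channel whose image quadrilateral has a vertex in the relative interior of an edge or face of the square can be written as a nontrivial convex combination of two other channels by perturbing that vertex along the edge). Tallying the orientation-preserving and orientation-reversing vertex-permutations together with the degenerate ones that map onto an edge or a single vertex should produce the claimed $36$, matching Table~\ref{tab1}.

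The main obstacle, I expect, is the completeness direction: proving that the enumerated list is exhaustive, i.e., that no further exotic extreme channels exist. Showing a given map is extreme is a routine rank/saturation check, but ruling out all other extreme points requires a clean argument that any channel not in the list lies on a segment between two channels. The cleanest route is the perturbation argument sketched above: given $T$ with image vertices $v_i=T\omega_i$, if some $v_i$ is not a vertex of $\mathbf{S}_{\bm\square}$ then it has a direction $\pm\delta$ of free movement inside the square; because the four conditions $T\omega_i\in\mathbf{S}_{\bm\square}$ decouple across $i$ only partially (the affine map is overdetermined: four image points but only six degrees of freedom, so the $v_i$ are not independent), one must instead perturb the parameter vector $(\Vec r,\Vec s)$ directly and verify that a suitable perturbation keeps all four images inside the square — this is where the linear-algebra of the constraint matrix has to be handled carefully. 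An alternative, more computational but fully rigorous, route is to simply feed the explicit inequality description of $\mathbf{T}_{\bm\square}$ to a vertex-enumeration computation; I would mention this as a cross-check against the $36$ vertices listed in Table~\ref{tab1}, while presenting the symmetry-based argument as the conceptual proof.
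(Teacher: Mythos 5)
Your first half is fine and matches the paper: writing $T=[\vec r,\vec s]$ with the last row fixed by normalization, and imposing $T(\omega_i)\in\mathbf{S}_{\bm\square}$ only on the four extreme states, gives a finite system of linear inequalities in $\mathbb{R}^6$, hence a polytope. The gap is in the second half, and it sits exactly where you admit it does: the exhaustiveness of your list of extreme points is never actually established, only delegated to a ``carefully handled'' perturbation of $(\vec r,\vec s)$ or to a computer vertex enumeration. The paper closes this with one structural observation you miss: the $16$ inequalities $-1\le \pm r_1\pm r_2+r_3\le 1$ and $-1\le \pm s_1\pm s_2+s_3\le 1$ involve $\vec r$ and $\vec s$ \emph{separately}, and each block cuts out the cross-polytope (octahedron) $|r_1|+|r_2|+|r_3|\le 1$ with vertices $\gamma_1^{\pm},\gamma_2^{\pm},\gamma_3^{\pm}$. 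Hence $\mathbf{T}_{\bm\square}$ is a Cartesian product of two octahedra, and the extreme points of a product are exactly the pairs of extreme points, giving $6\times 6=36$ at once — both the enumeration and the ``no exotic extreme channels'' direction follow with no perturbation analysis at all. Your vertex-to-vertex characterization is in fact equivalent to this (one can check that $|\vec r\cdot\omega_i|=|\vec s\cdot\omega_i|=1$ for all four $i$ forces $\vec r,\vec s\in\{\gamma_1^\pm,\gamma_2^\pm,\gamma_3^\pm\}$), but as a proof strategy it leaves the hard step open.

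A second, concrete flaw is in your tally. The non-invertible extreme channels do not only ``collapse the square onto an edge or a vertex'': eight of them collapse it onto a \emph{diagonal} (e.g.\ $[\gamma_1^+,\gamma_1^+]$, i.e.\ $(x,y)\mapsto(x,x)$ — precisely the channel used in Theorem~\ref{theo2}). The correct breakdown is $8$ reversible, $8$ diagonal-collapsing, $16$ edge-collapsing, and $4$ constant maps; counted as you describe, you would land on $28$, not $36$. To recover $36$ along your route you must enumerate all vertex assignments $\sigma$ compatible with the affine constraint $T\omega_1+T\omega_3=T\omega_2+T\omega_4$ (equivalently $v_1+v_3=v_2+v_4$), which indeed yields $36$ realizable maps, matching Table~\ref{tab1}. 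So the endpoint of your characterization is right, but both the completeness argument and the count need repair; the product-of-octahedra observation is the missing idea that supplies both.
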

Proof of this proposition and the geometric visualization of these extremal channels are deferred to Appendix-{\bf B}. Notably, out of the $36$ extreme channels $8$ are reversible. The set of channels obtained through stochastic application of reversible channels are called random reversible channels which is denoted as $\mathrm{RaRe}_{\bm\square}$. It has been argued in \cite{DallArno2017} that action of these reversible channels on part of a composite system always lead to bona-fide probability. The remaining being analogous to quantum measure-and-prepare channels are entanglement-breaking and hence they also satisfy the required demand (see Appendix-{\bf B}). A bistochastic channel keeps completely mixed state invariant, and hence demands $r_3=0=s_3$; $\mathrm{BiSto}_{\bm\square}$ denotes the set of all such channels. Our next result analyses the status of Birkhoff theorem in square-bit. 
\begin{table}[t!]
\centering
\begin{tabular}{|c||c|c|c|c|c|c|}
\hline
&&&&&&\\[-.3cm]
$\Vec{r}|\Vec{s}$&$\gamma_1^+$&$\gamma_1^-$&$\gamma_2^+$&$\gamma_2^-$&$\gamma_3^+$&$\gamma_3^-$\\[.1cm]
\hline\hline
&&&&&&\\[-.3cm]
$\gamma_1^+$&~~$\cellcolor{green!25}_{34}^{}\bm\square^{12}_{}$~~&~~$\cellcolor{green!25}_{}^{34}\bm\square^{}_{12}$~~&~~$\cellcolor{blue!15}_{3}^{4}\bm\square^{1}_{2}$~~&~~$\cellcolor{purple!25}_{4}^{3}\bm\square^{2}_{1}$~~&~~$_{}^{34}\bm\square^{12}_{}$~~&~~$_{34}^{}\bm\square^{}_{12}$~~\\[.1cm]
\hline
&&&&&&\\[-.3cm]
$\gamma_1^-$&~~$\cellcolor{green!25}_{}^{12}\bm\square^{}_{34}$~~&~~$\cellcolor{green!25}_{12}^{}\bm\square^{34}_{}$~~&~~$\cellcolor{purple!25}_{2}^{1}\bm\square^{4}_{3}$~~&~~$\cellcolor{blue!15}_{1}^{2}\bm\square^{3}_{4}$~~&~~$_{}^{12}\bm\square^{34}_{}$~~&~~$_{12}^{}\bm\square^{}_{34}$~~\\[.1cm]
\hline
&&&&&&\\[-.3cm]
$\gamma_2^+$&~~$\cellcolor{purple!25}_{3}^{2}\bm\square^{1}_{4}$~~&~~$\cellcolor{blue!15}_{2}^{3}\bm\square^{4}_{1}$~~&~~$\cellcolor{green!25}_{23}^{}\bm\square^{14}_{}$~~&~~$\cellcolor{green!25}_{}^{23}\bm\square^{}_{14}$~~&~~$_{}^{23}\bm\square^{14}_{}$~~&~~$_{23}^{}\bm\square^{}_{14}$~~\\[.1cm]
\hline
&&&&&&\\[-.3cm]
$\gamma_2^-$&~~$\cellcolor{blue!15}_{4}^{1}\bm\square^{2}_{3}$~~&~~$\cellcolor{purple!25}_{1}^{4}\bm\square^{3}_{2}$~~&~~$\cellcolor{green!25}_{}^{14}\bm\square^{}_{23}$~~&~~$\cellcolor{green!25}_{14}^{}\bm\square^{23}_{}$~~&~~$_{}^{14}\bm\square^{23}_{}$~~&~~$_{14}^{}\bm\square^{}_{23}$~~\\[.1cm]
\hline
&&&&&&\\[-.3cm]
$\gamma_3^+$&~~$_{}^{}\bm\square^{12}_{34}$~~&~~$_{}^{}\bm\square^{34}_{12}$~~&~~$_{}^{}\bm\square^{14}_{23}$~~&~~$_{}^{}\bm\square^{23}_{14}$~~&~~$_{}^{}\bm\square^{1234}_{}$~~&~~$_{}^{}\bm\square^{}_{1234}$~~\\[.1cm]
\hline
&&&&&&\\[-.3cm]
$\gamma_3^-$&~~$_{34}^{12}\bm\square^{}_{}$~~&~~$_{12}^{34}\bm\square^{}_{}$~~&~~$_{23}^{14}\bm\square^{}_{}$~~&~~$_{14}^{23}\bm\square^{}_{}$~~&~~$_{}^{1234}\bm\square^{}_{}$~~&~~$_{1234}^{}\bm\square^{}_{}$~~\\[.1cm]
\hline
\end{tabular}
\caption{(Color online) Extreme channels acting on ${\bf S}_{\bm\square}$. Here, $\gamma_1^{\pm}:=(\pm1,0,0),~\gamma_2^{\pm}:=(0,\pm1,0),~ \gamma_3^{\pm}:=(0,0,\pm1)$. Here we use the symbol $_{\&}^{\#}\bm\square^{\star}_{\%}$ to conveniently represent a transformation. For instance, $_{3}^{4}\bm\square^{1}_{2}$ represents the identity map $T_{Id}\equiv[\gamma^+_1,\gamma^+_2]$, whereas a clockwise rotation of $\pi/2$ (about $z$-axix), the transformation $T_{\frac{\pi}{2}}\equiv[\Vec{r},\Vec{s}]\equiv[\gamma_2^+,\gamma_1^-]$ will be represented as $_{2}^{3}\bm\square^{4}_{1}$, meaning  $T_{\frac{\pi}{2}}(\omega_1)=\omega_2,~T_{\frac{\pi}{2}}(\omega_2)=\omega_3,~T_{\frac{\pi}{2}}(\omega_3)=\omega_4,~T_{\frac{\pi}{2}}(\omega_4)=\omega_1$. Throughout the manuscript we will use these two representations interchangeably.}\label{tab1}
\vspace{-.5cm}
\end{table}
\begin{theorem}\label{theo1}
$\mathrm{RaRe}_{\bm\square}\subsetneq\mathrm{BiSto}_{\bm\square}$, and thus Birkhoff theorem does not hold true in Square-bit model.
\end{theorem}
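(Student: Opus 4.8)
\emph{Proof plan.} The plan is to first pin down $\mathrm{BiSto}_{\bm\square}$ explicitly, then exhibit a single bistochastic channel that provably cannot be written as a convex mixture of the eight reversible channels. A channel is $[\vec r,\vec s]$ with $\vec t=(0,0,1)$; the invariance $T(\omega_m)=\omega_m$ of $\omega_m=(0,0,1)^\intercal$ forces $r_3=s_3=0$, so a bistochastic channel acts as $(x,y,1)^\intercal\mapsto(r_1x+r_2y,\,s_1x+s_2y,\,1)^\intercal$. Since $\mathbf S_{\bm\square}$ is the square $\{(x,y,1)^\intercal:|x|\le1,\,|y|\le1\}$, state-preservation is equivalent to $|r_1|+|r_2|\le1$ and $|s_1|+|s_2|\le1$; conversely any such $[\vec r,\vec s]$ is a convex mixture of the extreme channels of Table~\ref{tab1} that have $r_3=s_3=0$ (all of which are either reversible or entanglement-breaking measure-and-prepare maps, hence admissible by Proposition~\ref{prop1} and the discussion preceding it). So $\mathrm{BiSto}_{\bm\square}$ is the product of two $\ell^1$-balls in $\mathbb R^2$, a polytope with $16$ vertices, of which exactly the $8$ with $\vec r,\vec s$ supported on complementary coordinate axes are the dihedral symmetries of the square, i.e.\ the reversible channels.

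Next I would record the trivial inclusion $\mathrm{RaRe}_{\bm\square}\subseteq\mathrm{BiSto}_{\bm\square}$: each reversible channel permutes $\{\omega_1,\dots,\omega_4\}$ and hence fixes their average $\omega_m$, and this is preserved under convex mixtures. The content of the theorem is strictness. For this I would take the measure-and-prepare channel $T^\star\equiv[\gamma_1^+,\gamma_1^+]$ (one of the $36$ extreme channels of Table~\ref{tab1}), which acts as $(x,y,1)^\intercal\mapsto(x,x,1)^\intercal$ — equivalently, measure $M_{13}$ and re-prepare $\omega_1$ or $\omega_3$. It plainly fixes $\omega_m$, so $T^\star\in\mathrm{BiSto}_{\bm\square}$, while it is rank-$1$ on the plane and so not reversible.

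Finally I would separate $T^\star$ from $\mathrm{RaRe}_{\bm\square}$ by a linear functional on channel space: set $F([\vec r,\vec s]):=r_1+s_1$. On each of the $8$ reversible channels exactly one of $\vec r,\vec s$ equals $(\pm1,0,0)^\intercal$ and the other $(0,\pm1,0)^\intercal$, so $F$ takes values in $\{\pm1\}$ there; by convexity $F\le1$ on all of $\mathrm{RaRe}_{\bm\square}$. But $F(T^\star)=1+1=2>1$, so $T^\star\notin\mathrm{RaRe}_{\bm\square}$, giving $\mathrm{RaRe}_{\bm\square}\subsetneq\mathrm{BiSto}_{\bm\square}$. (Geometrically, $F$ is the coordinate sum of the image of the edge-midpoint $(1,0,1)^\intercal$: a mixture of reversible channels sends edge-midpoints into the inscribed diamond $\{|x|+|y|\le1\}$, whereas $T^\star$ sends it to the vertex $\omega_1$, which lies outside.) I expect the only step needing genuine care to be the first one — confirming that the norm conditions $|r_1|+|r_2|\le1$, $|s_1|+|s_2|\le1$ characterize $\mathrm{BiSto}_{\bm\square}$ exactly, with no further constraint from positivity on composites trimming the polytope — which is where Proposition~\ref{prop1} and the entanglement-breaking property of the non-reversible extreme channels are used.
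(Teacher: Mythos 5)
Your proof is correct, and it reaches the conclusion by a slightly different route than the paper at the decisive step. The paper's own proof (together with Appendix~B) works entirely at the level of vertex enumeration: $\mathrm{RaRe}_{\bm\square}$ is the convex hull of the $8$ reversible extreme maps, while $\mathrm{BiSto}_{\bm\square}$ has $8$ additional extreme points (the green cells of Table~\ref{tab1}) that are not reversible; since an extreme point of $\mathrm{BiSto}_{\bm\square}$ lying in $\mathrm{RaRe}_{\bm\square}$ would have to coincide with one of the $8$ reversible generators, strictness follows. You recover the same characterization of $\mathrm{BiSto}_{\bm\square}$ (the product of two $\ell^1$-balls, $16$ vertices, $8$ of them the dihedral symmetries, admissibility of the measure-and-prepare vertices via Proposition~\ref{prop1}), but you certify strictness differently: you fix the single witness $T^\star\equiv[\gamma_1^+,\gamma_1^+]$ and exclude it from $\mathrm{RaRe}_{\bm\square}$ with the explicit separating functional $F([\vec r,\vec s])=r_1+s_1$, which equals $\pm1$ on every reversible vertex, hence is at most $1$ on $\mathrm{RaRe}_{\bm\square}$, yet equals $2$ on $T^\star$. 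What this buys: your argument is self-contained at the strictness step (you never need to verify that the measure-and-prepare channels are genuinely extreme in $\mathrm{BiSto}_{\bm\square}$, only that $T^\star$ is an admissible channel fixing $\omega_m$), it yields a quantitative, witness-based separation with a clean geometric reading (edge-midpoints of the square are mapped into the inscribed diamond by any random-reversible map, but to a vertex by $T^\star$), and your witness is exactly the channel the paper reuses in Theorem~\ref{theo2} for the asymptotic violation. The paper's route is shorter given the Appendix-B enumeration and exhibits all $8$ non-$\mathrm{RaRe}$ extreme channels at once rather than a single one.
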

\begin{proof}
The set $\mathrm{RaRe}_{\bm\square}$ is the convex hull of the $8$ extreme reversible maps -- $4$ rotations and $4$ reflections (indicated by blue and purple shades respectively in Table \ref{tab1}). On the other hand, along with these $8$ extremal reversible maps, the set $\mathrm{BiSto}_{\bm\square}$ has $8$ more extreme points that are not reversible channels (shaded in green in Table \ref{tab1}). This completes the proof.     
\end{proof}
While the asymptotic version of this theorem \cite{Smolin2005} is also known to be violated in quantum theory \cite{Haagerup2011,Haagerup2015}, we now establish  similar result in non-quantum GPTs. 
\begin{theorem}\label{theo2}
$\exists~T\in\mathrm{BiSto}_{\bm\square}~s.t.~T^{\otimes n}\notin\mathrm{RaRe}_{\bm\square^{\otimes n}},~\forall~n\in\mathbb{N}_+$, establishing asymptotic Birkhoff-violation in Square-bit.    
\end{theorem}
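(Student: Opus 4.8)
The plan is to exhibit one explicit channel $T\in\mathrm{BiSto}_{\bm\square}$ and to show, by an elementary extremality argument, that \emph{every} tensor power $T^{\otimes n}$ lies outside $\mathrm{RaRe}_{\bm\square^{\otimes n}}$. I would take $T\equiv[\gamma_1^+,\gamma_1^+]$, i.e. the linear map $(x,y,z)^\intercal\mapsto(x,x,z)^\intercal$, which is one of the eight green (non-reversible) extreme channels of $\mathrm{BiSto}_{\bm\square}$ singled out in the proof of Theorem~\ref{theo1}. It fixes $\omega_m$, hence is bistochastic, and it is of measure-and-prepare (entanglement-breaking) type — it collapses $\{\omega_1,\omega_2\}$ onto $\omega_1$ and $\{\omega_3,\omega_4\}$ onto $\omega_3$ — so that $T^{\otimes n}$ is a legitimate (again entanglement-breaking) channel on $\bm\square^{\otimes n}$ for every $n$, acting factorwise on product states.

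The crux is that $T$ already maps the two \emph{distinct} (in fact perfectly distinguishable, via $M_{13}$) pure states $\omega_1$ and $\omega_2$ to the \emph{single} pure state $\omega_1$. Hence $T^{\otimes n}(\omega_1^{\otimes n})=\omega_1^{\otimes n}=T^{\otimes n}(\omega_2^{\otimes n})$, while $\omega_1^{\otimes n}\ne\omega_2^{\otimes n}$ and both are pure, hence extreme points of $\mathbf{S}_{\bm\square^{\otimes n}}$. I would then argue by contradiction: if $T^{\otimes n}=\sum_i p_iV_i$ with $p_i>0$ and each $V_i$ a reversible channel on $\bm\square^{\otimes n}$, evaluating at $\omega_1^{\otimes n}$ gives $\omega_1^{\otimes n}=\sum_i p_iV_i(\omega_1^{\otimes n})$, and extremality of $\omega_1^{\otimes n}$ forces $V_i(\omega_1^{\otimes n})=\omega_1^{\otimes n}$ for every $i$; the identical computation at $\omega_2^{\otimes n}$ forces $V_i(\omega_2^{\otimes n})=\omega_1^{\otimes n}$ for every $i$ as well. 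So each $V_i$ sends the distinct states $\omega_1^{\otimes n},\omega_2^{\otimes n}$ to a common point, contradicting its invertibility. This yields $T^{\otimes n}\notin\mathrm{RaRe}_{\bm\square^{\otimes n}}$ for all $n\in\mathbb{N}_+$.

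The two structural facts I would take care to justify — neither of which is a genuine obstacle — are: (i) a tensor product of pure states is again pure (extreme) in the composite, which holds for the minimal and maximal tensor products and everything in between, in particular for the PR-box composite of square-bits, since a bipartite state with a pure marginal is necessarily a product state; and (ii) ``reversible'' on $\bm\square^{\otimes n}$ means invertible, hence injective on the whole state space, so the argument also covers the entangling reversible transformations that do not factorize across the $n$ copies. I would relegate (i) to the appendix. The reason this proof is so short — in sharp contrast with the quantum asymptotic Birkhoff problem — is that the obstruction is already present at a single copy (Theorem~\ref{theo1}) in the strong form of a bistochastic channel that \emph{collapses a perfectly distinguishable pair of pure states onto a pure state}; unlike unital qubit channels (which are $\mathrm{RaRe}$) or more general unital qudit channels, such behaviour is available in the square-bit model, and it tensorizes for free, so none of the $C^*$-algebraic machinery used in the quantum case is needed.
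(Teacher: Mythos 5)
Your proposal is correct and follows essentially the same route as the paper: the same extreme bistochastic channel $T=[\gamma_1^+,\gamma_1^+]$ collapsing $\omega_1,\omega_2$ to $\omega_1$, the same extremality-of-$\omega_i^{\otimes n}$ argument forcing each reversible map in a putative convex decomposition of $T^{\otimes n}$ to send both $\omega_1^{\otimes n}$ and $\omega_2^{\otimes n}$ to $\omega_1^{\otimes n}$, contradicting invertibility. Your added care about extremality of product pure states under any composition between the minimal and maximal tensor products matches what the paper defers to its Appendix~C.
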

\begin{proof}
Consider the extreme bistochastic map $T=[\gamma_1^+,\gamma_1^+]\in\mathrm{BiSto}_{\bm\square}$ (the first element in Table \ref{tab1}). Action of its $n$-fond tensor product on $\omega_i^{\otimes}$ yields $T^{\otimes n}\left(\omega_i^{\otimes n}\right)=\omega_1^{\otimes n}$, for $i\in\{1,2\}$. Assume that $T^{\otimes n}=\sum_jp_jR_j\in\mathrm{RaRe}_{\bm\square^{\otimes n}}$, where $R_j$'s are the extreme reversible channels on ${\bf S}_{\bm\square^{\otimes n}}$. The action $T^{\otimes n}\left(\omega_i^{\otimes n}\right)=\omega_1^{\otimes n}$ demands $R_j\left(\omega_i^{\otimes n}\right)=\omega_1^{\otimes n}$ for all $j$, whenever $p_j>0$ (since $\omega_j^{\otimes n}$'s are extreme point of $\bf{S}_{\bm\square}^{\otimes n}$). However, this leads to a contradiction: two extreme states $\omega_1^{\otimes n}$ and $\omega_2^{\otimes n}$ maps to same state under the action of reversible maps $R_j$'s. Hence, the assumption $T^{\otimes n}\in\mathrm{RaRe}_{\bm\square^{\otimes n}}$ is proven false. This completes the proof.  
\end{proof}
Note that in a GPT several kinds of competitions are possible while composite systems are considered \cite{Naik2022,Sen2022,Lobo2022,Patra2023}. In Appendix-{\bf C}, we argue that the claim of Theorem \ref{theo2} holds in all the bona-fide compositions one may consider to describe the multipartite systems. There we also show that, not only this square-bit, rather the class of other GPTs with state spaces described by symmetric evengons exhibit violation of asymptotic Birkhoff theorem. 
\begin{figure}[b!]
\centering
\begin{tikzpicture}[scale=1.5]
\draw[dashed] (-1.8,0,0)--(1.8,0,0); 
\draw[->,thin] (1.8,-1.8,0)--(2.0,-1.8,0);
\draw[dashed] (0,-1.8,0)--(0,1.8,0);
\draw[->,thin] (-1.8,1.8,0)--(-1.8,2.0,0);
\coordinate (S1) at (1.8,1.8,0);
\coordinate (S2) at (-1.8,1.8,0);
\coordinate (S3) at (-1.8,-1.8,0);
\coordinate (S4) at (1.8,-1.8,0);
\coordinate (A1) at (.3,1.2,0);
\coordinate (A2) at (-.3,1.2,0);
\coordinate (A3) at (-1.2,.3,0);
\coordinate (A4) at (-1.2,-.3,0);
\coordinate (A5) at (-.3,-1.2,0);
\coordinate (A6) at (.3,-1.2,0);
\coordinate (A7) at (1.2,-.3,0);
\coordinate (A8) at (1.2,.3,0);
\coordinate (B1) at (.75,.75,0);
\coordinate (B2) at (-.75,.75,0);
\coordinate (B3) at (-.75,-.75,0);
\coordinate (B4) at (.75,-.75,0);
\coordinate (C1) at (1.2,1.2,0);
\coordinate (C2) at (-1.2,1.2,0);
\coordinate (C3) at (-1.2,-1.2,0);
\coordinate (C4) at (1.2,-1.2,0);
\draw [thick] (S1)--(S2)--(S3)--(S4)--(S1);
\draw [dotted,thick,fill=blue!50, fill opacity=0.3] (C1)--(C2)--(C3)--(C4)--(C1);
\draw [thin,fill=red!50, fill opacity=0.3] (A1)--(A2)--(A3)--(A4)--(A5)--(A6)--(A7)--(A8)--(A1);
\shade[ball color=black] (S1) circle (.05cm);
\shade[ball color=black] (S2) circle (.05cm);
\shade[ball color=black] (S3) circle (.05cm);
\shade[ball color=black] (S4) circle (.05cm);
\shade[ball color=abc] (A1) circle (.05cm);
\shade[ball color=blue] (A2) circle (.01cm);
\shade[ball color=blue] (A3) circle (.01cm);
\shade[ball color=blue] (A4) circle (.01cm);
\shade[ball color=blue] (A5) circle (.01cm);
\shade[ball color=blue] (A6) circle (.01cm);
\shade[ball color=blue] (A7) circle (.01cm);
\shade[ball color=blue] (A8) circle (.01cm);
\shade[ball color=red] (B1) circle (.05cm);
\shade[ball color=red] (B2) circle (.05cm);
\shade[ball color=red] (B3) circle (.05cm);
\shade[ball color=red] (B4) circle (.05cm);
\shade[ball color=blue] (C1) circle (.05cm);
\shade[ball color=blue] (C2) circle (.05cm);
\shade[ball color=blue] (C3) circle (.05cm);
\shade[ball color=blue] (C4) circle (.05cm);
\shade[ball color=green1] (0,0,0) circle (.05cm);
\node at ($(A1)+(0,0.2,0)$) {\scalebox{1.5}{$\omega$}};
\node at ($(C1)+(.1,0.25,0)$) {\scalebox{1.5}{$\omega^\prime_{\scaleto{00}{3pt}}$}};
\node at ($(C2)+(0,0.25,0)$) {\scalebox{1.5}{$\omega^\prime_{\scaleto{01}{3pt}}$}};
\node at ($(C3)+(0,-0.25,0)$) {\scalebox{1.5}{$\omega^\prime_{\scaleto{11}{3pt}}$}};
\node at ($(C4)+(.1,-0.25,0)$) {\scalebox{1.5}{$\omega^\prime_{\scaleto{10}{3pt}}$}};
\node at ($(B1)+(-.1,-0.25,0)$) {\scalebox{1.5}{$\omega_{\scaleto{00}{3pt}}$}};
\node at ($(B2)+(.2,-0.25,0)$) {\scalebox{1.5}{$\omega_{\scaleto{01}{3pt}}$}};
\node at ($(B3)+(.2,0.25,0)$) {\scalebox{1.5}{$\omega_{\scaleto{11}{3pt}}$}};
\node at ($(B4)+(-.1,0.25,0)$) {\scalebox{1.5}{$\omega_{\scaleto{10}{3pt}}$}};
\node at ($(S1)+(.1,0.25,0)$) {\scalebox{1.5}{$\omega_{\scaleto{1}{3pt}}$}};
\node at ($(S2)+(-0.2,0.2,0)$) {\scalebox{1.5}{$\omega_{\scaleto{4}{3pt}}$}};
\node at ($(S3)+(0,-0.25,0)$) {\scalebox{1.5}{$\omega_{\scaleto{3}{3pt}}$}};
\node at ($(S4)+(.1,-0.25,0)$) {\scalebox{1.5}{$\omega_{\scaleto{2}{3pt}}$}};
\node at (.22,0.15,0) {\scalebox{1.5}{$\omega_{\scaleto{m}{2pt}}$}};
\node at (0,2,0) {\scalebox{1.5}{$e^{\scaleto{0}{3pt}}_{\scaleto{0}{3pt}}$}};
\node at (0,-2,0) {\scalebox{1.5}{$e^{\scaleto{1}{3pt}}_{\scaleto{0}{3pt}}$}};
\node at (2,0,0) {\scalebox{1.5}{$e^{\scaleto{0}{3pt}}_{\scaleto{1}{3pt}}$}};
\node at (-2,0,0) {\scalebox{1.5}{$e^{\scaleto{1}{3pt}}_{\scaleto{1}{3pt}}$}};
\node at (2.1,-1.6,0) {\scalebox{1.5}{$p$}};
\node at (-1.65,2.1,0) {\scalebox{1.5}{$q$}};
\end{tikzpicture} 
\caption{(Color online) Given the state  $\omega\in\mathbf{S}_{\bm\square}$, the sets $\{\omega_{x_0x_1}\}$ and $\{\omega^\prime_{x_0x_1}\}$ denote the encoding states when Alice is restricted to apply $\mathrm{RaRe}_{\bm\square}$ and $\mathrm{BiSto}_{\bm\square}$, respectively. For decoding $x_0$, Bob performs the measurement  $M_{13}\equiv\left\{e_0^0:=e_1,~e_0^1:=e_3\right\}$ and and guess the bit value as $i\in\{0,1\}$ when the effect $e_0^i$ clicks. For $x_1$ similar strategy is applied with the measurement $M_{24}\equiv\left\{e_1^0:=e_2,~e_1^1:=e_3\right\}$.}
\vspace{-.5cm}
\label{fig2}
\end{figure}

{\it Implications in Information Processing.--} We now proceed to analyze manifestations of Birkhoff-violation in information processing. To this aim, we consider a variant of  RAC task. In the simplest $2\mapsto1$ RAC task, a sender (Alice) is given a random bit string $x_0x_1\in\{0,1\}^{\times2}$, whereas a distant receiver (Bob) has to guess $x_y$ given the random input $y\in\{0,1\}$ \cite{Wiesner1983,Ambainis2002}. Communicating a qubit is known to be advantageous over its classical counterpart, whereas a square-bit is advantageous over the optimal qubit strategy (see Appendix-{\bf D}). Notably, the optimal strategies in qubit as well as in square-bit invoke pure states for encoding at the sender's end. Since preparation of such states is costly \cite{Szilard1929,Bennett1993,Maruyama2009}, here we consider that the sender does not possess any state preparation device, rather the referee provides her some fixed state. Alice is allowed to apply any channel $T\in\mathbf{T}^\prime\subseteq\mathbf{T}_{\bm\square}$ on this state to encode the given string, and accordingly sends the encoded system to Bob. By $P_\omega[\mathbf{T}^\prime]$ we denote the optimal worst case success achieved in $2\mapsto1$ RAC when Alice is provided the state $\omega$ and she is allowed to apply any channel from the set $\mathbf{T}^\prime$ on this stste for encoding. This leads us to our next result. 
\begin{theorem}\label{theo3}
Given a state $\omega=(2p-1,2q-1,1)^\intercal\in\mathbf{S}_{\bm\square}$, $P_\omega[\mathbf{RaRe}_{\bm\square}]=1/2(|2p-1|+|2q-1|)\le\max\{|2p-1|,|2q-1|\}=P_\omega[\mathbf{BiSto}_{\bm\square}]$.  
\end{theorem}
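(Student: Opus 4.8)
\emph{Proof proposal.} The plan is to turn the task, for each admissible class $\mathbf{T}'$, into a two-dimensional convex optimization. Writing $u:=2p-1$ and $v:=2q-1$, so that $\omega=(u,v,1)^\intercal$ with $u,v\in[-1,1]$, I would first determine the \emph{reachable set} $\mathcal{K}(\mathbf{T}'):=\{T(\omega)\mid T\in\mathbf{T}'\}$. Since a convex combination of channels produces, on the fixed input $\omega$, the corresponding convex combination of outputs, $\mathcal{K}(\mathbf{T}')$ is the convex hull of $\{T(\omega)\}$ over the extreme points of $\mathbf{T}'$. Using Proposition~\ref{prop1} and Table~\ref{tab1}: the eight extreme reversible channels are the symmetries of the square, so the orbit of $\omega$ consists of the (generically eight) points $(\pm u,\pm v,1)^\intercal$ and $(\pm v,\pm u,1)^\intercal$, and $\mathcal{K}(\mathbf{RaRe}_{\bm\square})$ is the octagon they span. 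For $\mathbf{BiSto}_{\bm\square}$ one appends the eight ``green'' extreme channels of Table~\ref{tab1}, i.e.\ $[\gamma_1^{\pm},\gamma_1^{\pm}]$ and $[\gamma_2^{\pm},\gamma_2^{\pm}]$; a direct computation gives $[\gamma_1^{+},\gamma_1^{+}](\omega)=(u,u,1)^\intercal$ and, running over signs, these eight channels send $\omega$ to $(\pm u,\pm u,1)^\intercal$ and $(\pm v,\pm v,1)^\intercal$. The convex hull of all sixteen image points is then the axis-aligned square $[-M,M]\times[-M,M]\times\{1\}$ with $M:=\max\{|u|,|v|\}$: its four corners sit among the above image points (those of $[\gamma_1^{\pm},\gamma_1^{\pm}]$ if $|u|\ge|v|$, otherwise those of $[\gamma_2^{\pm},\gamma_2^{\pm}]$), and every remaining image point lies inside.

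Next I would write the worst-case success explicitly. With Bob decoding as in Fig.~\ref{fig2} ($M_{13}$ for $y=0$, $M_{24}$ for $y=1$, outputting the index of the clicked effect) and the effects of Fig.~\ref{fig1}, an encoding state $(a,b,1)^\intercal$ for the string $x_0x_1$ yields success probability $\tfrac12\bigl(1+(-1)^{x_0}b\bigr)$ when $y=0$ and $\tfrac12\bigl(1+(-1)^{x_1}a\bigr)$ when $y=1$. Hence, optimizing over Alice's four encodings, the worst-case success equals $\tfrac12(1+c)$, where $c$ is the maximum over $(a_{x_0x_1},b_{x_0x_1},1)^\intercal\in\mathcal{K}(\mathbf{T}')$ of the minimum of the eight quantities $(-1)^{x_0}b_{x_0x_1}$ and $(-1)^{x_1}a_{x_0x_1}$. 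This minimum decouples over the four strings, and since $\mathcal{K}(\mathbf{T}')$ is invariant under the coordinate sign-flips $(a,b)\mapsto(\pm a,\pm b)$, the four single-string subproblems share the common value $c^\star(\mathbf{T}'):=\max\{\min(a,b)\mid (a,b,1)^\intercal\in\mathcal{K}(\mathbf{T}')\}$, so $c=c^\star(\mathbf{T}')$. Thus the raw worst-case probability is $\tfrac12\bigl(1+c^\star(\mathbf{T}')\bigr)$, and $P_\omega[\mathbf{T}']=c^\star(\mathbf{T}')$ once it is rescaled to vanish on the completely mixed input, as in Appendix-{\bf D}. (That the decoding of Fig.~\ref{fig2} is optimal for Bob is a short separate check.)

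It then remains to evaluate $c^\star$ on the two reachable sets. For the square $[-M,M]^2$ the value is $M$, attained at $(M,M)$, giving $P_\omega[\mathbf{BiSto}_{\bm\square}]=\max\{|2p-1|,|2q-1|\}$. For the octagon, evaluating the linear functional $a+b$ at its eight extreme points shows $a+b\le|u|+|v|$ throughout, whence $\min(a,b)\le\tfrac12(a+b)\le\tfrac12(|u|+|v|)$, with both inequalities saturated at $\bigl(\tfrac{|u|+|v|}{2},\tfrac{|u|+|v|}{2}\bigr)$, the midpoint of the edge joining $(|u|,|v|)$ and $(|v|,|u|)$, which lies in $\mathcal{K}(\mathbf{RaRe}_{\bm\square})$. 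This yields $P_\omega[\mathbf{RaRe}_{\bm\square}]=\tfrac12(|2p-1|+|2q-1|)$, and the stated inequality is the elementary bound $\tfrac12(|u|+|v|)\le\max\{|u|,|v|\}$, saturated exactly when $|2p-1|=|2q-1|$.

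The only step needing genuine care is the identification of the two reachable sets from Table~\ref{tab1}, in particular confirming that the octagon is strictly contained in the square whenever $|2p-1|\ne|2q-1|$ (this is precisely what makes $\mathbf{BiSto}_{\bm\square}$ strictly more powerful in this task and gives the strict inequality) and handling the degenerate configurations $u=0$, $v=0$, or $|u|=|v|$ uniformly. Everything downstream is one-variable optimization together with the arithmetic-mean/maximum inequality, so I anticipate no further obstacle.
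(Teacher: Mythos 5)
Your proposal is correct and reaches the same numbers, but it gets there by a somewhat different and more systematic route than the paper. The paper's Appendix-{\bf D} proof is constructive: after using reversibility to push $\omega$ into the first quadrant, it exhibits explicit encodings — equal mixtures of two reflections/rotations to reach $(\pm\alpha,\pm\alpha,1)^\intercal$ with $\alpha=\tfrac12[(2p-1)+(2q-1)]$ for $\mathrm{RaRe}_{\bm\square}$, and compositions of a ``green'' extreme channel with reversible maps to reach $(\pm\beta,\pm\beta,1)^\intercal$ with $\beta=\max\{|2p-1|,|2q-1|\}$ for $\mathrm{BiSto}_{\bm\square}$ — and then asserts optimality ``follows from Fig.~\ref{fig2}.'' You instead characterize the full reachable sets $\mathcal{K}(\mathbf{T}')$ (the octagon of $(\pm u,\pm v),(\pm v,\pm u)$ for $\mathrm{RaRe}_{\bm\square}$, the square $[-M,M]^2$ for $\mathrm{BiSto}_{\bm\square}$, both correct by affinity of $T\mapsto T(\omega)$ and Proposition~\ref{prop1}/Table~\ref{tab1}) and solve the max--min over them, so achievability and optimality over Alice's encodings come out of a single convex-geometric computation, with the bound $\min(a,b)\le\tfrac12(a+b)\le\tfrac12(|u|+|v|)$ doing the work that the paper leaves to the figure; this is a genuine strengthening of the paper's Alice-side argument. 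Two caveats, neither fatal: (i) both you and the paper fix Bob's decoding to the $M_{13},M_{24}$ strategy of Fig.~\ref{fig2} and defer its optimality (your parenthetical ``short separate check''), so on that point you are at the same level of rigor as the published proof; (ii) your observation that the raw worst-case probability is $\tfrac12\bigl(1+c^\star\bigr)$, so that the theorem's $P_\omega[\cdot]$ must be read as the rescaled bias, is in fact more careful than the paper, whose displayed equation silently equates $T_{x_0x_1}(\omega)\cdot e^y_{x_y}$ (which is $\tfrac{1+\alpha}{2}$) with $\alpha$; your reading is the only one under which the stated formulas, e.g.\ a value below $1/2$ near the maximally mixed state, make sense, and it leaves the claimed equality-and-inequality chain intact.
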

The strategies resulting to optimal successes in Theorem \ref{theo3} are depicted in Fig.\ref{fig2}, with detailed calculations provided in Appendix-{\bf D}.

{\it Thermodynamic Implications.--} One might expect a similar gap as of Theorem \ref{theo3} with higher dimensional quantum systems by considering the corresponding higher-level RAC tasks \cite{Tavakoli2015}. However, such an intuition is not true as the state transformation criteria under quantum RaRe channels and under the set of unital channels boil down to the same majorization conditions \cite{Uhlmann1970,Gour2015}. This prompts us to explore the following question: what are the state transformation conditions in square-bit theory when subjected to these two sets of operations?. 

Given two vectors $\vec x, \vec y\in\mathbb{R}^n$, $\vec x$ is said to majorize $\vec y$, denoted as $\vec x\succ \vec y$, when $\sum_{i=1}^kx_i^{\downarrow}\ge \sum_{i=1}^ky_i^{\downarrow},~\forall~k\in\{1,\cdot\cdot,n\}$ with equality holding for $k=n$; here $x_i^{\downarrow}$ denotes $i^{th}$ largest entry of $\vec x$ \cite{Marshall2011}. Birkhoff theorem ensures that a classical state (probability vector)  $\Vec{p}$ can be converted into another state $\Vec{q}$ under a bistochastic operation {\it if and only if} the former majorizes the later, i.e., $\Vec{p}\succ\Vec{q}$ \cite{Birkhoff1946}. Similarly, a quantum state $\rho$ can be mapped to another state $\sigma$ under a RaRe operation {\it if and only if} $\vec \lambda_{\rho}\succ\vec \lambda_{\sigma}$, where $\vec \lambda_{\star}$ denotes spectral of the corresponding state \cite{Horodecki2003(2)}. Notably, the state transformation conditions remain to be the same under the set of noisy operations as well as under the set of unital operations, although for qutrit and beyond these three sets are distinct \cite{Chiribella2017}. Coming back to the square-bit model, we now completely specify the state transformation criteria (proof provided in Appendix-{\bf E}). 
\begin{theorem}\label{theo4}
Given $\omega,\omega^\prime\in{\bf S}_{\bm\square}$, the transformation $\omega\to\omega^\prime$ is possible under $\mathrm{BiSto}_{\bm\square}$ if and only if, $\max\{p,q\}\ge \max\{p^\prime,q^\prime\}$.
%\vspace{-.1cm}
\end{theorem}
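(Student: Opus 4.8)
The plan is to show that a single scalar governs $\mathrm{BiSto}_{\bm\square}$-reachability: the $\ell^\infty$-radius $R(\omega):=\max\{|2p-1|,|2q-1|\}$ of the Bloch vector $v_\omega:=(2p-1,2q-1)^\intercal$ of $\omega$ — equivalently the least $t\in[0,1]$ with $\omega=\omega_m+t(\chi-\omega_m)$ for some $\chi\in\mathbf{S}_{\bm\square}$, and, by Theorem \ref{theo3}, nothing but $P_\omega[\mathbf{BiSto}_{\bm\square}]$. In the standard normal form $p,q\in[\tfrac12,1]$ (reached by the outcome-relabelling reflections, which are reversible hence bistochastic) one has $\max\{p,q\}=\tfrac12(1+R(\omega))$, so the asserted criterion is exactly $R(\omega)\ge R(\omega')$, and that is what I would prove.

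\emph{Necessity.} A unital channel $T\equiv[\vec r,\vec s]\in\mathrm{BiSto}_{\bm\square}$ has $\vec t=(0,0,1)$ (normalization) and $r_3=s_3=0$ (fixing $\omega_m$), so on states it acts on Bloch vectors through the $2\times2$ block $A=\left[\begin{smallmatrix}r_1&r_2\\ s_1&s_2\end{smallmatrix}\right]$, i.e.\ $v_{T\omega}=Av_\omega$. Since $T(\mathbf{S}_{\bm\square})\subseteq\mathbf{S}_{\bm\square}$ and $\mathbf{S}_{\bm\square}$ is precisely the $\ell^\infty$ unit ball of the Bloch plane, $A$ maps $[-1,1]^2$ into itself, hence its induced norm $\|A\|_{\infty\to\infty}=\max\{|r_1|+|r_2|,\,|s_1|+|s_2|\}\le1$. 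Therefore $R(T\omega)=\|Av_\omega\|_\infty\le\|A\|_{\infty\to\infty}\,\|v_\omega\|_\infty\le R(\omega)$.

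\emph{Sufficiency.} Suppose $R(\omega)\ge R(\omega')$ and put $R:=R(\omega)$; if $R=0$ then $\omega=\omega'=\omega_m$ and $T=\mathrm{Id}$ works, so take $R>0$. I would realize $\omega\to\omega'$ in three bistochastic steps. (i) Apply the reversible symmetry of the square that permutes/flips the two Bloch coordinates so that the one of larger modulus comes first and is nonnegative; this carries $\omega$ to $\tilde\omega=(R,c,1)^\intercal$ with $|c|\le R$ and leaves $R$ unchanged. (ii) Apply the non-reversible extreme bistochastic channel $[\gamma_1^+,\gamma_1^+]$ of Table \ref{tab1}, which sends $(x,y,1)^\intercal\mapsto(x,x,1)^\intercal$; thus $\tilde\omega\mapsto(R,R,1)^\intercal=\omega_m+R(\omega_1-\omega_m)$. (iii) Write $\omega'=(x',y',1)^\intercal$ with $|x'|,|y'|\le R(\omega')\le R$ and apply the channel $[\,(x'/R,0,0)\,,\,(y'/R,0,0)\,]$; it sends $(R,R,1)^\intercal\mapsto(x',y',1)^\intercal=\omega'$, its Bloch block has induced norm $R^{-1}\max\{|x'|,|y'|\}\le1$, and since $\{\left[\begin{smallmatrix}a&0\\ b&0\end{smallmatrix}\right]:|a|,|b|\le1\}=\mathrm{conv}\{\left[\begin{smallmatrix}\pm1&0\\ \pm1&0\end{smallmatrix}\right]\}$ while the four channels $[\gamma_1^\pm,\gamma_1^\pm]$ all lie in $\mathrm{BiSto}_{\bm\square}$ (green cells of Table \ref{tab1}), this channel is a convex combination of bistochastic channels, hence bistochastic. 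Composing the three maps — and noting $\mathrm{BiSto}_{\bm\square}$ is closed under composition, since composition preserves $\omega_m$ and validity — gives one bistochastic channel taking $\omega$ to $\omega'$.

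\emph{Where the difficulty sits.} Necessity is immediate once one reads $R$ as an operator-norm bound. The real content is the sufficiency construction, and in particular the fact that one must leave $\mathrm{RaRe}_{\bm\square}$: by Theorem \ref{theo1} it is strictly smaller, and its reachability order is genuinely finer (this is exactly what yields two distinct resource theories of purity). The essential move is that the non-reversible ``collapse-onto-a-diagonal'' extreme channels of Table \ref{tab1} push an arbitrary state radially out to a corner of its own $\ell^\infty$-ball, after which the availability of all contractions acting along a single Bloch direction makes the entire interior of that ball reachable; the remaining bookkeeping (matrix form of a unital $\mathbf{T}_{\bm\square}$ element, the convex decomposition in step (iii)) is routine.
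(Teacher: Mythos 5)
Your proof is correct, and it establishes the same underlying geometric fact as the paper -- that the $\mathrm{BiSto}_{\bm\square}$-orbit of $\omega$ is exactly the $\ell^\infty$ ball of radius $R(\omega)=\max\{|2p-1|,|2q-1|\}$ centred on $\omega_m$ -- but by a different execution. The paper (Appendix~E) works in the normal form $p,q\in[1/2,1]$ and characterizes the reachable set in one stroke, as the convex polytope ${\bf S}^\omega_{\mathrm{BiSto}}=\mbox{ConvHull}\{\omega^\prime_{00},\omega^\prime_{01},\omega^\prime_{10},\omega^\prime_{11}\}$, i.e.\ the convex hull of the images of $\omega$ under the extremal bistochastic maps (Fig.~\ref{fig5}), from which both directions of the criterion are read off geometrically. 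You instead split the equivalence: necessity follows from the observation that the Bloch block $A$ of any unital channel satisfies $\|A\|_{\infty\to\infty}\le1$ because ${\bf S}_{\bm\square}$ is the $\ell^\infty$ unit ball, so $R$ is a monotone; sufficiency is an explicit three-step composition (a dihedral symmetry, the collapse channel $[\gamma_1^+,\gamma_1^+]$, and a convex combination of the four channels $[\gamma_1^\pm,\gamma_1^\pm]$). Your route makes fully explicit two points the paper leaves to linearity and the figure -- why nothing outside the square is reachable, and which concrete channel realizes a given admissible target -- at the cost of a slightly longer argument; the paper's orbit-polytope description is more compact and feeds directly into the analogous $\mathrm{RaRe}_{\bm\square}$ analysis of Theorem~\ref{theo5}. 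You also correctly note (as the paper does before its proof) that the stated inequality is to be read after the reversible normal-form reduction $p,q,p^\prime,q^\prime\in[1/2,1]$, where $\max\{p,q\}=\tfrac12(1+R(\omega))$; without that reduction the criterion should be phrased in terms of $R$.
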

\begin{theorem}\label{theo5}
Given $\omega,\omega^\prime\in{\bf S}_{\bm\square}$, the transformation $\omega\to\omega^\prime$ is possible under $\mathrm{RaRe}_{\bm\square}$ if and only if, $\max\{p,q\}\ge \max\{p^\prime,q^\prime\}$ and $p+q\ge p^\prime+q^\prime$.
%\vspace{-.1cm}
\end{theorem}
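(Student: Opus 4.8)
\medskip
\noindent\emph{Proof plan.}—The plan is to reduce the statement to a two–dimensional convexity computation. Work in the vectorized picture, writing $\omega\leftrightarrow(x,y):=(2p-1,2q-1)\in[-1,1]^2$, so that $\mathbf S_{\bm\square}$ is the square and $\omega_m\leftrightarrow(0,0)$. By Theorem~\ref{theo1} (see Table~\ref{tab1}) the eight reversible channels on $\mathbf S_{\bm\square}$ are exactly the symmetries of this square -- the four rotations and four reflections -- which form a group $G$ of order eight, and by definition $\mathrm{RaRe}_{\bm\square}=\operatorname{conv}(G)$. The first step is the elementary identity
\[
\mathcal R(\omega):=\{\,\omega'\ :\ \omega\to\omega'\text{ under }\mathrm{RaRe}_{\bm\square}\,\}=\operatorname{conv}(G\cdot\omega),
\]
which holds because every $R\in\mathrm{RaRe}_{\bm\square}$ is a mixture $\sum_j p_jG_j$ with $G_j\in G$, so $R\omega=\sum_j p_j(G_j\omega)$ lies in $\operatorname{conv}(G\cdot\omega)$, and conversely every point of that hull is the image of $\omega$ under the channel $\sum_j p_jG_j\in\mathrm{RaRe}_{\bm\square}$. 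Consequently $\mathcal R(\omega)$ is $G$-invariant and depends only on the $G$-orbit of $\omega$, whose only invariant is the unordered pair $\{|2p-1|,|2q-1|\}$; writing $a_\omega:=\max\{|2p-1|,|2q-1|\}$ and $s_\omega:=|2p-1|+|2q-1|$ -- which on the positive-quadrant orbit representative read $2\max\{p,q\}-1$ and $2(p+q)-2$ -- the two inequalities in the statement are precisely $a_{\omega'}\le a_\omega$ and $s_{\omega'}\le s_\omega$.

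For necessity I would use convexity. Both $g(\omega):=\max\{|2p-1|,|2q-1|\}$ and $f(\omega):=|2p-1|+|2q-1|$ are convex (each is a maximum or a sum of absolute values of affine functionals) and $G$-invariant, hence non-increasing under every $R=\sum_j p_jG_j\in\mathrm{RaRe}_{\bm\square}$: $g(R\omega)\le\sum_j p_j\,g(G_j\omega)=g(\omega)$, and likewise for $f$. Thus any $\omega'\in\mathcal R(\omega)$ obeys $a_{\omega'}\le a_\omega$ and $s_{\omega'}\le s_\omega$, giving both displayed inequalities.

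For sufficiency I would make $\operatorname{conv}(G\cdot\omega)$ explicit. Taking the orbit representative $(a,b)$ with $a:=a_\omega\ge b:=s_\omega-a_\omega\ge0$, the orbit consists of the eight points $(\pm a,\pm b)$ and $(\pm b,\pm a)$; ordering them cyclically and reading off supporting lines, one verifies that the facets are exactly $x=\pm a$, $y=\pm a$, $x+y=\pm(a+b)$ and $x-y=\pm(a+b)$, so
\[
\operatorname{conv}(G\cdot\omega)=\bigl\{(x,y)\ :\ \max\{|x|,|y|\}\le a,\ |x|+|y|\le a+b\bigr\},
\]
a (possibly degenerate) octagon -- a diamond if $b=0$, a square if $a=b$, the point $(0,0)$ if $a=0$. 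Hence if $\omega'$ satisfies the two inequalities, then after moving it by an element of $G$ to its positive-quadrant representative it lies in this octagon, so $\omega'=\sum_j p_j(G_j\omega)$ for suitable $G_j\in G$ and probabilities $p_j$, and the channel $\sum_j p_jG_j\in\mathrm{RaRe}_{\bm\square}$ realizes $\omega\to\omega'$.

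The only genuine work is the geometric step: checking that those eight lines are actually the facets of $\operatorname{conv}(G\cdot\omega)$ and dispatching the degenerate orbits $b=0$, $a=b$, $a=0$ separately; everything else is covariance bookkeeping and convexity. It is instructive to contrast this with Theorem~\ref{theo4}: there the reachable set is the full square $\max\{|x'|,|y'|\}\le a_\omega$, and the strictly larger family $\mathrm{BiSto}_{\bm\square}$ is precisely what is needed to reach the corner regions excised by the extra constraint $f(\omega')\le f(\omega)$ -- the quantitative manifestation of $\mathrm{RaRe}_{\bm\square}\subsetneq\mathrm{BiSto}_{\bm\square}$.
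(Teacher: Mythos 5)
Your proposal is correct and takes essentially the same route as the paper's Appendix~E proof: there too the reachable set is identified as the convex hull of the orbit of $\omega$ under the eight reversible symmetries of the square (the octagon of Fig.~\ref{fig6}), and the two inequalities are exactly its facet description after reducing to the quadrant $p,q,p^\prime,q^\prime\ge 1/2$. Your necessity step via convex, $G$-invariant functions is just a slightly more formal rendering of what the paper reads off the polytope, so the substance coincides.
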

As it turns out, some transformations are not admissible under $\mathrm{RaRe}_{\bm\square}$, but otherwise possible under $\mathrm{BiSto}_{\bm\square}$, and this fact underlies the operational distinction as reported in Theorem \ref{theo3}. At this point, we leave open the question of what other operational implications might arise due to these different state transformation criteria.  

Once the state transformation criteria are established, it becomes possible to identify `monotones' that hold operational interest. For an arbitrary GPT $(\mathbf{S},\mathbf{E},\mathbf{T})$, a function $f_{\mathbf{T}^\prime}:\mathbf{S}\mapsto\mathbb{R}$ is said to be a non-decreasing [non-increasing] monotone under $\mathbf{T}^\prime\subseteq\mathbf{T}$, if $f_{\mathbf{T}^\prime}(\omega)\leq f_{\mathbf{T}^\prime}(\omega^\prime)~[f_{\mathbf{T}^\prime}(\omega)\geq f_{\mathbf{T}^\prime}(\omega^\prime)]$ whenever $\omega\xrightarrow[]{T}\omega^\prime$ for some $T\in\mathbf{T}^\prime$. Such monotones are often useful for comparing resources in different states. For instance, Shannon entropy $H(\Vec{p}):=-\sum_{i}p_i\log p_i$ of a classical state $\Vec{p}\equiv\{p_i\}$ is a (non-decreasing) monotone under the set of bistochastic operations, which quantifies the degree of randomness in a state that cannot be decreased further under bistochastic operations \cite{Marshall2011}. Similarly, von Neumann entropy $S(\rho):=-\tr[\rho\log\rho]$ of a quantum state $\rho$ is a (non-decreasing) monotone under the set of RaRe operations, noisy operations, as well as unital
operations \cite{Chiribella2017}. Coming back to the square-bit theory, we recall that a Schur-convex function $f:\mathbb{R}^n\to \mathbb{R}$ satisfies $f(\vec x)\le f(\vec y)$ whenever $\vec x\succ \vec y$ \cite{Peajcariaac1992}. Accordingly, Theorems \ref{theo4} \& \ref{theo5} lead to the following entropic monotones:
\begin{align}
&\mbox{Bistochastic:}~~~~\left\{\!\begin{aligned}
(i)~S_{vN}(\omega):=H(\eta)
\end{aligned}\right\},\\
&\mbox{RaRe:}~~~~~\left\{\!\begin{aligned}
&~~~~~~~~~(i)~S_{vN}(\omega),~\mbox{and}\\
&(ii)~S_{tot}(\omega):=H(p)+H(q)
\end{aligned}\right\},
\end{align}
where, $\eta:=\max\{p,q\}$ and $H(x):=-x\log x-(1-x)\log(1-x)$ in Shannon entropy \cite{Shannon1948}. As will we now argue, the monotone common in both the cases carries importance significance. 

{\it Entropy of a GPT State.--} The quantity $S_{vN}(\omega)$ can be defined as the entropy of the GPT state $\omega\in{\bf S}_{\bm\square}$. Notably, it carries the characteristics of thermodynamic entropy as desired in von Neumann’s thought experiment \cite{vonNeumann1955} (see \cite{Petz2001} for a modern description of the same experiment). According to this definition, the boundary states of ${\bf S}_{\bm\square}$ have zero entropy. At this point, it is important to note that the notion of pure states in classical and quantum theory perfectly matches with the notion of extremality of state-space. However, this needs not to be the case for an arbitrary GPT. While extremality is a geometric concept, the notion of purity should be defined from an operational perspective. Recall that measurements in GPTs are the probing mechanism for accessing information about the system. Given a measurement $M \equiv \{e_i\}_{i=1}^k$, it might allow a finer refinement $\tilde{M} \equiv \{\tilde{e}_i\}_{i=1}^{\tilde{k}}$, where $\tilde{k}>k$ and suitable grouping of $\tilde{e}_i$'s reproduces the outcome statistics of $e_i$'s. A measurement is said to be fine-grained if it does not allow any nontrivial refinement. Such a measurement extracts information from the states in optimal possible way. One can now come up with an operationally motivated definition for the pure states (i.e., the states of maximal knowledge).
\begin{definition}
A state $\omega\in\mathbf{S}$ is said to be pure, if there exists at least one fine-grained measurement $M = \{e_i~|~e_i\in{\bf E}\}_{i=1}^k$ such that one of its effects $e_{i^\star}$ filters the state deterministically. 
\end{definition}
This definition captures all the extremal classical and quantum states as pure. The boundary states of ${\bf S}_{\bm\square}$ also become pure according to this definition, and hence they should be assigned zero entropy. Importantly, this definition possess the crucial aspect of the device called 'semipermeable membranes', introduced by von Neumann to capture the notion of 'thermodynamic differentness' \cite{Szilard1925,McKeown1927,Klein1967}. While most of the literatures identify purity as extremality \cite{Barnum2010,Kimura2010,Brunner2014,Krumm2017}, we find that our notion of entropy $S_{vN}(\star)$ perfectly match with the one proposed in \cite{Short2010}. However, a complete investigation of entropy also demands a careful analysis of measurement update in GPT, which we leave for future study.

{\it Conclusions.--} Our investigation of  Birkhoff-von Neumann theorem within GPT framework establishes that the violation of this theorem is not exclusive to quantum theory. In fact, Birkhoff-violation in GPTs can lead to exotic implications that are not possible in quantum case. Unlike in quantum theory, where the difference between bistochastic evolutions and random reversible evolutions does not affect the state transformation criteria, in other GPTs they can lead to different state transformation criteria. This distinction has intriguing implications for information processing, as illustrated in a variant of the Random Access Coding task. Additionally, we have identified different entropic functions that becomes monotones under bistochastic operations and RaRe evolutions, respectively. Interestingly, the monotone common to both cases exhibits characteristics similar to thermodynamic entropy, as envisioned in von Neumann's seminal thought experiment.

While we have established asymptotic Birkhoff-violation in all even-gons, our investigations indicate it to be hold true in pentagon model. As we have not obtained a general argument for the pentagon and for other odd-gons, this remains open for further exploration. Additionally, it will be interesting to explore which other GPTs, likewise quantum theory exhibit equivalent state transformation criteria under RaRe and bistochastic operations, despite violating Birkhoff theorem. These studies promise to shed further light on information processing and state transformations within the broader framework of GPTs, consequently offering deeper insights into the foundational aspects of quantum theory.
%\begin{comment}
\begin{acknowledgements}
{\bf Acknowledgements:} SGN acknowledges support from CSIR project 09/0575(15951)/2022-EMR-I. MA and MB acknowledge funding from the National Mission in Interdisciplinary Cyber-Physical systems from the Department of Science and Technology through the I-HUB Quantum Technology Foundation (Grant no: I-HUB/PDF/2021-22/008). 
\end{acknowledgements}
%\end{comment}

%\bibliography{Bibliography}

%apsrev4-2.bst 2019-01-14 (MD) hand-edited version of apsrev4-1.bst
%Control: key (0)
%Control: author (8) initials jnrlst
%Control: editor formatted (1) identically to author
%Control: production of article title (0) allowed
%Control: page (0) single
%Control: year (1) truncated
%Control: production of eprint (0) enabled
%

\onecolumngrid
\appendix
\section{Appendix A: Mathematical Framework of GPTs}\label{app-a}
In a GPT, the state-space $\mathbf{S}$ is assumed to be a convex-compact set embedded in some real vector space $\mathbb{V}$. Given two states $\omega_1,\omega_2\in{\bf S}$, convexity ensures preparing their statistical mixture, $p\omega_1+(1-p)\omega_2$. States that cannot be expressed in terms of convex mixture of other states are called the extreme states. On the other hand, no physical distinction is there between states that can be prepared exactly and states that can be prepared to arbitrary accuracy, which is captured through the assumption that the set $\mathbf{S}$ is topologically closed. To avoid mathematical intricacy here we will assume $\mathbb{V}$ to be finite dimensional, albeit the framework is general enough to include the infinite dimensional cases too. 

Effects ${\bf E}$ are linear functionals that map each normalized state to probabilities, i.e., $e(\omega)\in[0,1],~\forall~\omega\in\mathbf{S}~\&~e\in\mathbf{E}$. Normalization of the states are specified by the unit effect $u$, i.e., $u(\omega)=1,~\forall\omega\in\mathbf{S}$. It is often convenient to consider unnormalized states and effects that form positive cones. The effect cone ${\bf V}_+^\star$ is dual to the state cone ${\bf V}_+:=\{\lambda\omega~|~\lambda\in\mathbb{R}_{\ge}~\&~\omega\in{\bf S}\}$. Collection of effects adding up to the unit effect forms a measurement, i.e., $M\equiv\{e_i~|~\sum e_i=u\}$. Notion of distinguishable states can be defined accordingly. A set of states $\{\omega_j\}\subset{\bf S}$ will be perfectly distinguishable if there exist a measurement $M\equiv\{e_i\}$ such that $e_i(\omega_j)=\delta_{ij}$.

A channel $T\in{\bf T}$ maps normalized states to normalized states. It is also natural to demand that its action on a part of a larger system to result in valid composite states. For more detailed of the framework we refer the recent review \cite{Plvala2023} (see also references therein). We end this section by pointing out that the description of a quantum system associated with Hilbert space $\mathcal{H}$ perfectly fits within this framework, with $(\mathbf{S},\mathbf{E},\mathbf{T})\equiv(\mathbf{D}(\mathcal{H}),\mathbf{P}(\mathcal{H}),\mathbf{Ch}(\mathcal{H}))$. Here, $\mathbf{D}(\mathcal{H})$ and $\mathbf{P}(\mathcal{H})$ respectively denotes the set of density operators and the set of positive operators (bounded above by the identity operators) acting on $\mathcal{H}$, and $\mathbf{Ch}(\mathcal{H})$ denotes the set of all completely positive trace-preserving (CPTP) maps from space of linear operators $\mathcal{L}(\mathcal{H})$ to $\mathcal{L}(\mathcal{H})$.

\section{Appendix B: Proof of Proposition \ref{prop1}}
A transformation $T\in\mathbf{T}_{\bm\square}$ maps the normalized state-space to itself. It is natural to assume that the transformation preserves convexity, i.e., 
\begin{align}
T\left(\sum\alpha_k\omega_k\right)&=\sum\alpha_kT\left(\omega_k\right),\\
\mbox{where}~\omega_k\in\mathbf{S}_{\bm\square}&~\&~ \alpha_k\ge0~\mbox{with}~\sum_k\alpha_k=1.\nonumber  
\end{align}
Note that, normalization of a state is determined by the unit effect $u$, i.e., $u(\omega)=1$, ensuring form of a normalized state to be $(a,b,1)^\intercal$ with $a,b\in[-1,1]$. The normalized state-space $\mathbf{S}_{\bm\square}$ turns out to be the convex hull of its four extreme points  
\begin{align*}
\mathbf{S}_{\bm\square}\equiv\mbox{ConvHul}
\left\{\!\begin{aligned}
\omega_1&:=(1,1,1)^\intercal,~~~~~~~~\omega_2:=(1,-1,1)^\intercal,\\
\omega_3&:=(-1,-1,1)^\intercal,~~\omega_4:=(-1,1,1)^\intercal
\end{aligned}\right\}.
\end{align*}
One can also define the notion of sub-normalized state $\omega$ that satisfies $u(\omega)\le1$. The set of sub-normalized states forms a polytope $\mathbf{S}^{SN}_{\bm\square}$ having the five extreme points (see Fig.\ref{fig1} in main manuscript):
\begin{align*}
\mathbf{S}^{SN}_{\bm\square}\equiv\mbox{ConvHul}\left\{\!\begin{aligned}
\omega_1,\omega_2,\omega_3,\omega_4,\\
\omega_0:=(0,0,0)^\intercal
\end{aligned}\right\}.
\end{align*}
Along with convexity, the demand $T(\omega_0)=\omega_0$ ensures $T$ to be a linear operator on $\mathbb{R}^3$ \cite{Barrett2007}, and hence it can be expressed as $3\times 3$ matrix
\begin{align}
T\equiv\begin{pmatrix}r_1 & r_2 & r_3\\[-.1cm]s_1 & s_2 & s_3\\[-.1cm]t_1 & t_2 & t_3\end{pmatrix}\equiv[\vec r,\vec s,\vec t].
\end{align}
A transformation $T$ being normalization preserving demands
\begin{align}
T\begin{pmatrix}u\\v\\1\end{pmatrix}&=\begin{pmatrix}r_1u+r_2v+r_3\\s_1u+s_2v+s_3\\t_1u+t_2v+t_3=1\end{pmatrix};~\forall~u,v\in[-1,1];\nonumber\\
\Rightarrow t_1&=t_2=0~~~\&~~~t_3=1. 
\end{align}
Consequently, a normalization preserving transformation (a channel) $T$ is completely specified by the tuple $[\vec r,\vec s]$. 

The requirement that $T(\omega)\in \mathbf{S}_{\bm\square},~\forall~\omega\in\mathbf{S}_{\bm\square}$, can be ensured by checking its action on the extreme points $\{\omega_i\}_{i=1}^4$ of $\mathbf{S}_{\bm\square}$. Accordingly, an admissible transformation must satisfy the following set of inequalities:
\begin{subequations}
\begin{align}
-1\le(\pm r_1 \pm r_2 + r_3)&\le+1,\label{ineqa}\\
-1\le(\pm s_1 \pm s_2 + s_3)&\ge-1.\label{ineqd}
\end{align}
\end{subequations}
The inequalities (\ref{ineqa}-\ref{ineqd}) define a convex polytope in $\mathbb{R}^6$ determining the set of admissible channels $\mathbf{T}_{\bm\square}$. The polytope $\mathbf{T}_{\bm\square}$ can also be characterized by its extreme points, that can be obtained efficiently as follows:
\begin{itemize}
\item[] {\bf Step 1:} Science $6$ linearly independent planes define a unique point in $\mathbb{R}^6$, an extreme point must satisfy at least $6$ facet equations among the $16$
\begin{subequations}
\begin{align}
(\pm r_1 \pm r_2 + r_3)&=\pm 1,\label{eqa}\\
(\pm s_1 \pm s_2 + s_3)&=\pm 1\label{eqd}.
\end{align}
\end{subequations}
Thus, we can choose any $6$ facets from (\ref{eqa}-\ref{eqd}) to check weather they uniquely define a point.
\item[] {\bf Step 2:} If {\bf No}, then those $6$ facets do not define an extreme point. If {\bf Yes}, then it is checked whether the obtained solution satisfies (\ref{ineqa}-\ref{ineqd}). If these inequalities are {\bf Not}  satisfied, then the solution lies outside the polytope and hence it is not an extreme point. {\bf Else}, an extreme point is obtained.  
\item[] {\bf Step 3:} Repeat Step 1 and 2 for all possible $6$ tuple of facets so that we find all the extreme points. Since there are $16$ such facets, we need to repeat Step 1 and 2 for $16\choose6$ combinations.
\end{itemize}
\begin{figure}[t!]
\centering
\includegraphics[scale=0.5]{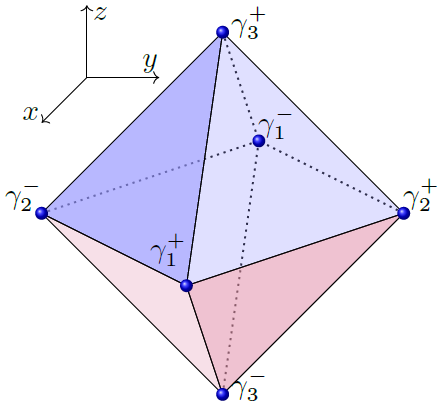}    
\caption{(Color online) The facets (\ref{eqa}) [similarly (\ref{eqd})] describe one of the {\it Platonic solids} Octahedron, which has the extreme points $\gamma^\pm_1:=(\pm1,0,0)^\intercal$, $\gamma^\pm_2:=(0,\pm1,0)^\intercal$, and $\gamma^\pm_3:=(0,0,\pm1)^\intercal$.}
\vspace{-.5cm}
\label{fig3}
\end{figure}
{\bf Remark 1:} Utilizing symmetry in (\ref{eqa}-\ref{eqd}), the extreme points can be obtained very easily. Eq.(\ref{eqa}) and Eq.(\ref{eqd}) respectively characterize the vector $\vec r=(r_1,r_2,r_3)^\intercal$ and $\vec s=(s_1,s_2,s_3)^\intercal$ independently. At-least $3$ facets from Eq.(\ref{eqa}), respectively from Eq.(\ref{eqd}), are necessary to uniquely define $\vec r$, respectively $\vec s$. As depicted in Fig.\ref{fig3}, the valid solutions can be obtained simply by plotting the respective facets. Accordingly, we have 
\begin{align}
\vec r,\vec s&\in\left\{\gamma_1^{\pm},\gamma_2^{\pm},\gamma_3^{\pm}\right\},\\
\gamma_1^{\pm}:=(\pm1,0,0)^\intercal,
\gamma_2^{\pm}&:=(0,\pm1,0)^\intercal,
\gamma_3^{\pm}:=(0,0,\pm1)^\intercal.\nonumber
\end{align}
Since each of the $\vec r$ and $\vec s$ take $6$ different values, we therefore have $36$ extreme channels (as listed in Table \ref{tab1}), i.e.  
\begin{align}\label{extch}
\mathbf{T}_{\bm\square}\equiv\mbox{ConvHul}\left\{T_{ext}:=\left[\gamma_i^{k_1},\gamma_j^{k_2}\right]\right\},\\
\mbox{with}~i,j\in\{1,2,3\}~\&~k_1,k_2\in\{+,-\}.\nonumber
\end{align}
~\vspace{-.5cm}\\
{\bf Remark 2:} Only $8$ , among the $36$ extreme channels in Eq.(\ref{extch}), are reversible transformations on $\mathbf{S}_{\bm\square}$. Accordingly, $\mathrm{RaRe}_{\bm\square}$ forms a sub-polytope of ${\bf T}_{\bm\square}$, and is given by
\begin{align}\label{raext}
\mathrm{RaRe}_{\bm\square}\equiv\mbox{ConvHul}\left\{T_{ext}:=\left[\gamma^{k_1}_i,\gamma^{k_2}_j\right]\right\},\\
\mbox{with}~i,j\in\{1,2\}~\&~i\neq j;~k_1,k_2\in\{+,-\}.\nonumber   
\end{align}
As pointed out in Theorem \ref{theo2}, $\mathrm{RaRe}_{\bm\square}\subsetneq\mathrm{BiSto}_{\bm\square}\subsetneq{\bf T}_{\bm\square}$, with    
\begin{align}
\mathrm{BiSto}_{\bm\square}\equiv\mbox{ConvHul}\left\{T_{ext}:=\left[\gamma^{k_1}_i,\gamma^{k_2}_j\right]\right\},\\
\mbox{with}~i,j\in\{1,2\};~k_1,k_2\in\{+,-\}.\nonumber
\end{align}
~\vspace{-.5cm}\\
{\bf Remark 3:} Within GPT framework various compositions are possible to describe multipartite system. Generally such compositions are constructed in accordance with the no-signaling principle and the local tomography demand \cite{Hardy2001}. Accordingly, the compositions can vary from the minimal tensor product, wherein entangled states are absent, to the maximal tensor product, wherein entangled effects are absent. 

Ensuring the demand that action of a channel on a part of a larger system results in valid states in minimal tensor product theory is straightforward, as this composition prohibits any entangled states. In other words, we can say that in minimal tensor product theory the demand of complete positivity of a map simply boils down to the demand of its positivity. However, in maximal tensor product theory this poses a nontrivial demand. Here,  we argue this for all the extreme maps listed in Table \ref{tab1}. For the reversible maps this has already been ensured in \cite{DallArno2017}. For the remaining channels, consider the entangled state $\omega^{ent}_{AB}=\frac{1}{2}(\omega^1_A\otimes\omega^2_B-\omega^2_A\otimes\omega^2_B+\omega^2_A\otimes\omega^3_B+\omega^3_A\otimes\omega^1_B)$ as referenced in \cite{Janotta2012}. The action of $_{34}^{}\bm\square^{12}_{}$ on one part of this entangled state yields
\begin{align}
_{34}^{}\bm\square^{12}_{A}\otimes \mathbb{I}_B \left[\omega^{ent}_{AB}\right]&=\frac{1}{2}(\omega^1_A\otimes\omega^3_B+\omega^3_A\otimes\omega^1_B).
\end{align}
The resulting state being an equal convex mixture of product states is separable and a valid state. Same can checked for all entangled states, ensuring complete positivity of this particular extreme map as well as the other (non-reversible) extreme maps. This can be seen in a different way too. The map $_{34}^{}\bm\square^{12}_{}$ represents a measure-and-prepare channel, which is entanglement breaking. Specifically, this map can be executed by performing the measurement $M_{13}\equiv\{e_1,e_3\}$ on the square-bit and subsequently preparing state $\omega_1$ ($\omega_3$) if outcome $e_1$ ($e_3$) gets clicked.

\section{Appendix C: Asymptotic Birkhoff-violation}
Smolin {\it et al.} in Ref.\cite{Smolin2005} raised an interesting question whether Birkhoff theorem holds in asymptotic setup. The distance between a channel $\Phi$ from the set of random reversible channel RaRe($\mathcal{H}$) can be measured by $\mathcal{D}(\Phi,\mbox{RaRe}(\mathcal{H})):=\inf \{\mathcal{D}(\Phi,\Psi):\Psi \in\mbox{RaRe}(\mathcal{H})\}$, where metric $\mathcal{D}(\Phi,\Psi)$ is given  by the `diamond norm' of ($\Phi-\Psi$). The question of asymptotic Birkhoff theorem thus boils down to whether $\lim_{n \to \infty} \mathcal{D}\left(\Phi^{\otimes n},\mbox{RaRe}(\mathcal{H}^{\otimes n})\right)\to0$, for all unital channels $\Phi$. It has been shown that the theorem does not hold in this asymptotic set up as well \cite{Haagerup2011,Haagerup2015}. 
\begin{figure}[b!]
\centering
\begin{tikzpicture}[scale=2.4]
\coordinate (A1) at (0.5,0.866025,0.);
\coordinate (A2) at (0.,1.,0.);
\coordinate (A3) at (-0.5,0.866025,0.);
\coordinate (A4) at (-0.866025,0.5,0);
\coordinate (A5) at (-1.,0.,0.);
\coordinate (A6) at (-0.866025,-0.5,0.);
\coordinate (A7) at (-0.5,-0.866025,0.);
\coordinate (A8) at (0.,-1.,0.);
\coordinate (A9) at (0.5,-0.866025,0.);
\coordinate (A10) at (0.866025,-0.5,0);
\coordinate (A11) at (1.,0.,0.);
\coordinate (A12) at (0.866025,0.5,0.);
\coordinate (B1) at (-0.333333, -0.57735, 0);
\coordinate (B2) at (-0.4, -0.69282, 0);
\coordinate (B3) at (0.333333, 0.57735, 0);
\coordinate (B4) at (0.4, 0.69282, 0);
\draw [thick,fill=blue!0, fill opacity=0.3] (A1)--(A2);
\draw [thick,fill=blue!0, fill opacity=0.3] (A7)--(A8);
\draw [thick,fill=blue!0, fill opacity=0.3] (A2)--(A3);
\draw [thick,fill=blue!0, fill opacity=0.3] (A1)--(A12);
\draw [thick,fill=blue!0, fill opacity=0.3] (A7)--(A6);
\draw [thick,fill=blue!0, fill opacity=0.3] (A8)--(A9);
\draw [thick,fill=blue!0, fill opacity=0.3] (A1)--(A7);
\draw[arrows = {-Latex[width=5pt, length=10pt]},thin] (A2)--(A1);
\draw[arrows = {-Latex[width=5pt, length=10pt]},thin] (A8)--(A7);
\draw[arrows = {-Latex[width=5pt, length=10pt]},thin] (A9)--(B1);
\draw[arrows = {-Latex[width=5pt, length=10pt]},thin] (A6)--(B2);
\draw[arrows = {-Latex[width=5pt, length=10pt]},thin] (A12)--(B4);
\draw[arrows = {-Latex[width=5pt, length=10pt]},thin] (A3)--(B3);
\draw [thick,dotted,fill=blue!0, fill opacity=0.3] (A12)--(A11);
\draw [thick,dotted,fill=blue!0, fill opacity=0.3] (A3)--(A4);
\draw [thick,dotted,fill=blue!0, fill opacity=0.3] (A5)--(A6);
\draw [thick,dotted,fill=blue!0, fill opacity=0.3] (A9)--(A10);
\shade[ball color=green1] (0,0,0) circle (.04cm);
\draw[thick] (A1) edge [loop above] (A1);
\draw[thick] (A7) edge [loop below] (A7);
\shade[ball color=purple] (A1) circle (.04cm);
\shade[ball color=blue] (A2) circle (.04cm);
\shade[ball color=blue] (A3) circle (.04cm);
\shade[ball color=blue] (A6) circle (.04cm);
\shade[ball color=purple] (A7) circle (.04cm);
\shade[ball color=blue] (A8) circle (.04cm);
\shade[ball color=blue] (A9) circle (.04cm);
\shade[ball color=blue] (A12) circle (.04cm);
\shade[ball color=abc] (B1) circle (.04cm);
\shade[ball color=abc] (B2) circle (.04cm);
\shade[ball color=abc] (B3) circle (.04cm);
\shade[ball color=abc] (B4) circle (.04cm);
\node at ((.2,0,0) {\scalebox{1.5}{$\omega_m$}};
\node at ($(A1)+(.16,0.14,0)$) {\scalebox{1.5}{$\omega_1$}};
\node at ($(A2)+(-.14,0.10,0)$) {\scalebox{1.5}{$\omega_2$}};
\node at ($(A3)+(-.14,0.14,0)$) {\scalebox{1.5}{$\omega_3$}};
\node at ($(A6)+(-.16,-0.08,0)$) {\scalebox{1.5}{$\omega_k$}};
\node at ($(A7)+(-.24,-0.14,0)$) {\scalebox{1.5}{$\omega_{k+1}$}};
\node at ($(A8)+(.16,-0.12,0)$) {\scalebox{1.5}{$\omega_{k+2}$}};
\node at ($(A9)+(.19,-0.14,0)$) {\scalebox{1.5}{$\omega_{k+3}$}};
\node at ($(A12)+(.24,-0.10,0)$) {\scalebox{1.5}{$\omega_{2k}$}};
\end{tikzpicture} 
\caption{(Color online) state-space ${\bf S}_{2k}$ of evengon-bit. The action of the bistochastic transformation $T$ on extreme states are depicted.}
\vspace{-.5cm}
\label{fig4}
\end{figure}
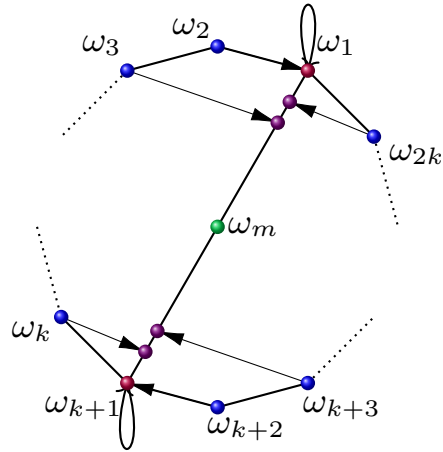

A similar result is established in Theorem \ref{theo2} for square-bit model. At this point, it needs to be clarified the kind of composition ${\bf S}_{\bm\square}^{\otimes n}$ considered in Theorem \ref{theo2}. Different compositions are possible for two square-bits, that range in between two extremes:
\begin{subequations}
\begin{align}
\bigotimes_{\min}:\left\{\!\begin{aligned}
{\bf S}^{\min}_{[2]}&:=\mbox{ConvHul}\{\omega_i\otimes\omega_j\}\\
{\bf E}^{\min}_{[2]}&:=\mbox{ConvHul}\{e_i\otimes e_j,~ E_k\}
\end{aligned}\right\},~\\
\bigotimes_{\max}:\left\{\!\begin{aligned}
{\bf S}^{\max}_{[2]}&:=\mbox{ConvHul}\{\omega_i\otimes\omega_j,~\Omega_k\}\\
{\bf E}^{\max}_{[2]}&:=\mbox{ConvHul}\{e_i\otimes e_j\}
\end{aligned}\right\},
\end{align}
\end{subequations}
where $i,j\in\{1,\cdots,4\}$ and $k\in\{1,\cdots,8\}$ with $E_K$ and $\Omega_k$ respectively denoting entangled effects and entangled states (see \cite{DallArno2017} for further details). Similar concept extends for multiple square-bits. Our Theorem \ref{theo2} holds for any of the composition lying in between $\bigotimes_{\min}$ and $\bigotimes_{\max}$. A generalization of Theorem \ref{theo2} is proven below for the evengon models where normalized state-space ${\bf S}_{2k}$ is specified as the convex hull of its $2k$ extreme points \cite{Janotta2011}:
\begin{align*}
{\bf S}_{2k}\equiv \mbox{ConvHull}\left\{\omega_i:=\left(\cos{\frac{\pi i}{k}},\sin{\frac{\pi i}{k}},1\right)^\intercal\right\}_{i=1}^{2k};
\end{align*}
where $k\ge2$ (see Fig. \ref{fig4}).
~\\
\begin{theorem}
For all $k\ge2,~\exists~T\in\mathrm{BiSto}_{2k},~s.t.~ T^{\otimes n}\notin\mathrm{RaRe}_{2k^{\otimes n}},~\forall~n\in\mathbb{N}_+$.
\end{theorem}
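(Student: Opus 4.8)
The plan is to mirror the proof of Theorem~\ref{theo2}: exhibit one map in $\mathrm{BiSto}_{2k}$ that sends two \emph{distinct extreme} states of $\mathbf{S}_{2k}$ to a common \emph{extreme} state, and then obstruct every $\mathrm{RaRe}$ decomposition of its tensor powers using injectivity of reversible channels.

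Writing normalized states in the $(x,y)$-plane (the coordinate $z\equiv1$ being fixed by normalization), let $T$ be the linear map that leaves $z$ unchanged and, on the plane, is the oblique projection onto the line $\mathbb{R}\omega_1$ along the direction $\vec d:=\omega_2-\omega_1$ of the edge $[\omega_1,\omega_2]$; equivalently, $T$ is the unique linear map fixing the completely mixed state $\omega_m=(0,0,1)^\intercal$ with $T(\omega_1)=\omega_1$ and $T(\omega_2)=\omega_1$. Two things need checking. (i) $T(\mathbf{S}_{2k})\subseteq\mathbf{S}_{2k}$: since $[\omega_1,\omega_2]$ and $[\omega_{k+1},\omega_{k+2}]$ are opposite parallel edges, the polygon lies in the strip between the two supporting lines carrying them, whereas the main diagonal $[\omega_1,\omega_{k+1}]$ (note $\omega_{k+1}=-\omega_1$ in the planar coordinates) crosses this strip from one bounding line to the other; hence the line through any vertex $\omega_i$ parallel to $\vec d$ meets $\mathbb{R}\omega_1$ at a point of $[\omega_1,\omega_{k+1}]\subseteq\mathbf{S}_{2k}$, so $T(\omega_i)\in\mathbf{S}_{2k}$, and convexity finishes. (ii) $T$ is a legitimate channel also on subsystems of composites, so that $T^{\otimes n}$ is well-defined in every bona-fide composition lying between the minimal and maximal tensor products (cf.\ the discussion around Theorem~\ref{theo2} and in this appendix): indeed, exactly as in Appendix~B, $T$ is entanglement-breaking, being the measure-and-prepare channel that measures $\{f,u-f\}$ --- with $f$ the edge effect equal to $1$ on $[\omega_1,\omega_2]$ and $0$ on $[\omega_{k+1},\omega_{k+2}]$ --- and then prepares $\omega_1$ or $\omega_{k+1}$; the two maps agree because both are affine in the input and coincide on the affinely independent triple $\{\omega_1,\omega_2,\omega_{k+1}\}$. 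Since $T$ fixes $\omega_m$ and is a valid channel, $T\in\mathrm{BiSto}_{2k}$.

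The asymptotic statement now follows as in Theorem~\ref{theo2}. From $T(\omega_1)=T(\omega_2)=\omega_1$ we get $T^{\otimes n}(\omega_1^{\otimes n})=T^{\otimes n}(\omega_2^{\otimes n})=\omega_1^{\otimes n}$. Suppose $T^{\otimes n}=\sum_j p_jR_j\in\mathrm{RaRe}_{2k^{\otimes n}}$ with the $R_j$ extreme reversible channels on $\mathbf{S}_{2k}^{\otimes n}$. Because $\omega_1^{\otimes n}$ and $\omega_2^{\otimes n}$ are extreme points of the composite state space, the equalities above force $R_j(\omega_1^{\otimes n})=R_j(\omega_2^{\otimes n})=\omega_1^{\otimes n}$ for every $j$ with $p_j>0$; but $\omega_1^{\otimes n}\neq\omega_2^{\otimes n}$, contradicting injectivity of the reversible maps $R_j$. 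Hence no such decomposition exists for any $n\in\mathbb{N}_+$.

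I expect the only step requiring real care to be (i), the claim that projecting the $2k$-gon onto one of its main diagonals along an adjacent edge direction keeps it inside the polygon; (ii) is a routine adaptation of the entanglement-breaking argument of Appendix~B, and the final contradiction is a verbatim copy of the square-bit case, with the edge-to-vertex projection $T$ playing the role of $[\gamma_1^+,\gamma_1^+]$.
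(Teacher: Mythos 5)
Your proof is correct and takes essentially the same route as the paper: your oblique projection is exactly the paper's map (the linear channel fixing $\omega_m$ with $T(\omega_1)=T(\omega_2)=\omega_1$, hence $T(\omega_{k+1})=T(\omega_{k+2})=\omega_{k+1}$), and the concluding contradiction --- extremality of $\omega_1^{\otimes n}$ and $\omega_2^{\otimes n}$ forcing each reversible $R_j$ in a putative decomposition to map two distinct extreme states to one, violating injectivity --- is the paper's argument verbatim. Your steps (i) and (ii) merely supply details the paper leaves implicit, namely that $T$ maps $\mathbf{S}_{2k}$ into the diagonal $[\omega_1,\omega_{k+1}]$ and that it is a valid (entanglement-breaking, measure-and-prepare) channel on composites, in the spirit of Remark~3 of Appendix~B.
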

\begin{proof}
The completely mixed state $\omega_m\in{\bf S}_{2k}$ is given by $\omega_m:=(0,0,1)^\intercal$, which is equal mixture of all the extreme points. This can also be expressed as $\omega_m=1/2(\omega_i+\omega_{i+k})$, $\forall ~i\in\{1,2,\cdots,k\}$. Consider the linear transformation $T$ given by
\begin{align}
\left\{\!\begin{aligned}
T(\omega_1)=\omega_1,&~T(\omega_2)=\omega_1,\\T(\omega_{k+1})=\omega_{k+1},&~T(\omega_{k+2})=\omega_{k+1}\end{aligned}\right\}.
\end{align}  
The action of $T$ on completely mixed state yields
\begin{align}
T(\omega_m)=T\left(\frac{1}{2}(\omega_1+\omega_{k+1})\right)\nonumber\\
=\frac{1}{2}(\omega_1+\omega_{k+1})=\omega_m,
\end{align}
ensuring $T\in \mathrm{BiSto}_{2k}$. Consider that the map $T^{\otimes n}$ allows a decomposition $T^{\otimes n}=\sum_ip_iR_i\in\mathrm{RaRe}_{2k^{\otimes n}}$, where $R_i$'s being extreme points of $\mathrm{RaRe}_{2k^{\otimes n}}$ are reversible maps on ${{\bf S}_{2k}^{\otimes n}}$. Now,  $T^{\otimes n}\left(\omega_j^{\otimes n}\right)=\omega_1^{\otimes n}$ for $j\in\{1,2\}$ demands $R_i\left(\omega_j^{\otimes n}\right)=\omega_1^{\otimes n}$ for all $i$, whenever $p_i>0$ (since $\omega_j^{\otimes n}$'s are extreme points of ${{\bf S}_{2k}^{\otimes n}}$). However, this leads to a contradiction: two extreme states $\omega_1^{\otimes n}$ and $\omega_2^{\otimes n}$ get mapped to a single extreme state under the action of reversible maps $R_i$'s. Hence, the assumption $T^{\otimes n}\in\mathrm{RaRe}_{2k^{\otimes n}}$ is proven false, and the asymptotic Birkoff-violation is established in all evengon models.
\end{proof}

\section{Appendix D: Proof of Theorem \ref{theo3}}
%\begin{comment}
\subsection{Random Access Code}
In $2\mapsto 1$ RAC task, Alice is provided a random bit string $x_0x_1\in\{0,1\}^{\times2}$, whereas the Bob has to guess $x_y$ given the random input $y\in\{0,1\}$. If only one classical bit communication is allowed from Alice to Bob, then Bob can perfectly guess one of the bit, while the other bit is completely random, making the classical worst case success probability $1/2$. Interestingly, with a qubit communication the optimal worst case success probability becomes $\frac{1}{2}\left(1+\frac{1}{\sqrt{2}}\right)$ \cite{Ambainis2008}. A qubit strategy yielding the optimal success reads as:
\begin{subequations}
\begin{align}
\mbox{Alice's Encoding:}&\left\{\!\begin{aligned}
00\mapsto\ket{+},
01\mapsto\ket{0},
10\mapsto\ket{1},
11\mapsto\ket{-}\end{aligned}\right\};\\
\mbox{Bob's Decoding:}&\left\{\!\begin{aligned}
x_0:~M_0\equiv\left\{\pi_0^i:=\frac{1}{2}\left(\mathbf{I}+(-1)^i\frac{\sigma_x+\sigma_z}{\sqrt{2}}\right)\right\}\\
x_1:~M_1\equiv\left\{\pi_1^i:=\frac{1}{2}\left(\mathbf{I}+(-1)^i\frac{\sigma_x-\sigma_z}{\sqrt{2}}\right)\right\}
\end{aligned}\right\},
\end{align}
\end{subequations}
where, $i\in\{0,1\}$ and $\ket{\pm}:=\frac{1}{\sqrt{2}}(\ket{0}\pm\ket{1})$. If Alice is allowed to communicate state from square-bit they end up with a perfect strategy \cite{Banik2015}: 
\begin{subequations}
\begin{align}
\mbox{Alice's Encoding:}&\left\{\!\begin{aligned}
00\mapsto\omega_1,~01\mapsto\omega_4,~
10\mapsto\omega_2,~11\mapsto\omega_3\end{aligned}\right\};\\ 
\mbox{Bob's Decoding:}&\left\{\!\begin{aligned}
x_0:~M_0\equiv M_{13}\equiv\left\{e_0^0:=e_1,~e_0^1:=e_3\right\}\\
x_1:~M_1\equiv M_{24}\equiv\left\{e_1^0:=e_2,~e_1^1:=e_3\right\}
\end{aligned}\right\}~.\label{boxd}
\end{align}
\end{subequations}

\subsection{RAC: encoding state provided by the Referee}
In Theorem \ref{theo3}, a fixed but state $\tilde\omega=(2p-1,2q-1,1)^\intercal$ is provided to Alice by the referee, where $0\le p,q\le 1$. Applying reversible maps this state can be converted into $\omega=(2p-1,2q-1,1)^\intercal\in\mathbf{S}_{\bm\square}$, with $p,q\in[1/2,1]$. Therefore, without loss of any generality we can restrict our analysis to the parameter ranges $p,q\in[1/2,1]$.\\
{\it Optimal success under $\mathrm{RaRe}_{\bm\square}$.--} Alice's encoding ($x_0x_1\mapsto\omega_{x_ox_1}$) is given by
\begin{align}
\left\{\!\begin{aligned}
\omega_{00}&:=0.5\left(\leftindex_3^4{\square}_{2}^{1}+\leftindex_3^2{\square}_{4}^{1}\right)(\omega)=(+\alpha,+\alpha,1)^\intercal,\\
\omega_{01}&:=0.5\left(\leftindex_4^1{\square}_{3}^{2}+\leftindex_2^1{\square}_{3}^{4}\right)(\omega)=(-\alpha,+\alpha,1)^\intercal,\\
\omega_{10}&:=0.5\left(\leftindex_2^3{\square}_{1}^{4}+\leftindex_4^3{\square}_{1}^{2}\right)(\omega)=(+\alpha,-\alpha,1)^\intercal,\\
\omega_{11}&:=0.5\left(\leftindex_1^2{\square}_{4}^{3}+\leftindex_1^4{\square}_{2}^{3}\right)(\omega)=(-\alpha,-\alpha,1)^\intercal
\end{aligned}\right\};
\end{align}
where, $\alpha:=1/2[(2p-1)+(2q-1)]$, and Bob's decoding is given by
\begin{align}
\left\{\!\begin{aligned}
x_0:~M_0\equiv M_{13}\equiv\left\{e_0^0:=e_1,~e_0^1:=e_3\right\}\\
x_1:~M_1\equiv M_{24}\equiv\left\{e_1^0:=e_2,~e_1^1:=e_3\right\}
\end{aligned}\right\}~.\label{boxd}    
\end{align}
The worst-case success probability thus becomes
\begin{align}
P_\omega[\mathrm{RaRe}_{\bm\square}]=&\min_{x_0,x_1,y}\{T_{x_0x_1}(\omega).e^y_{x_y}\}=\alpha=1/2[(2p-1)+(2q-1)].
\end{align}
The optimality of the protocol simply follows from Fig.\ref{fig2} (in main manuscript). For an arbitrary state $\omega=(2p-1,2q-1,1)^\intercal\in\mathbf{S}_{\bm\square}$, with $p,q\in[0,1]$, the success probability thus reads as $P_\omega[\mathrm{RaRe}_{\bm\square}]=1/2(|2p-1|+|2q-1|)$.\\
{\it Optimal success under $\mathrm{BiSto}_{\bm\square}$.--} In this case, Alice's encoding ($x_0x_1\mapsto\omega^\prime_{x_ox_1}$) is given by: 
\begin{align}
\left\{\!\begin{aligned}
\omega^\prime_{00}&:=\leftindex_{3}^{4}{\square}_{2}^{1}\circ\leftindex_{23}^{}{\square}_{}^{14}(\omega)=(+\beta,+\beta,1)^\intercal,\\
\omega^\prime_{01}&:=\leftindex_{4}^{1}{\square}_{3}^{2}\circ\leftindex_{23}^{}{\square}_{}^{14}(\omega)=(-\beta,+\beta,1)^\intercal,\\
\omega^\prime_{10}&:=\leftindex_{2}^{3}{\square}_{1}^{4}\circ\leftindex_{23}^{}{\square}_{}^{14}(\omega)=(+\beta,-\beta,1)^\intercal,\\
\omega^\prime_{11}&:=\leftindex_{1}^{2}{\square}_{4}^{3}\circ\leftindex_{23}^{}{\square}_{}^{14}(\omega)=(-\beta,-\beta,1)^\intercal\end{aligned}\right\};
\end{align}
where, $\beta=\max\{|2p-1|,|2q-1|\}$, and Bob follows the decoding (\ref{boxd}). Accordingly, the success probability becomes
\begin{align}
P_\omega[\mathrm{BiSto}_{\bm\square}]=&\min_{x_0,x_1,y}\{T_{x_0x_1}(\omega).e^y_{x_y}\}=\beta=\max\{|2p-1|,|2q-1|\}.
\end{align}
Manifestly, $P_\omega[\mathrm{RaRe}_{\bm\square}]\le P_\omega[\mathrm{BiSto}_{\bm\square}]$, and the equality holds only for the states with $|2p-1|=|2q-1|$.

\section{Appendix E: State Transformation Criteria and Monotones}
A generic state $\omega\in{\bf S}_{\bm\square}$ reads as $\omega\equiv(p,1-p|q,1-q)^\intercal\equiv(2p-1,2q-1,1)^\intercal$, with $p,q\in[0,1]$. As shown in the Top-view (see Fig.\ref{fig5}), plotting $p$ and $q$ along $x$ and $y$, respectively we have:
\begin{itemize}
\item[(1)] $p=q=1/2$: represent completely mixed state $\omega_m$;
\item[(2)] $p,q\in(1/2,1]$: states are in $1^{st}$ quadrant (w.r.t. $\omega_m$); 
\item[(3)] $p\in[0,1/2)~\&~q\in(1/2,1]$: states are in $2^{nd}$ quadrant;
\item[(4)] $p,q\in[0,1/2)$: states are in $3^{rd}$ quadrant;
\item[(5)] $p\in(1/2,1]~\&~q\in[0,1/2)$: states are in $4^{th}$ quadrant;
\end{itemize}
Any state from $2^{nd}$, $3^{rd}$, and $4^{th}$ quadrant can be brought into $1^{st}$ quadrant by applying suitable reversible operations. Therefore, without loss of any generality we can limit our study to the pairs $\omega\equiv(p,1-p|q,1-q)^\intercal~\&~ \omega^\prime\equiv(p^\prime,1-p^\prime|q^\prime,1-q^\prime)^\intercal$, with $p,q,p^\prime,q^\prime\in[1/2,1]$.
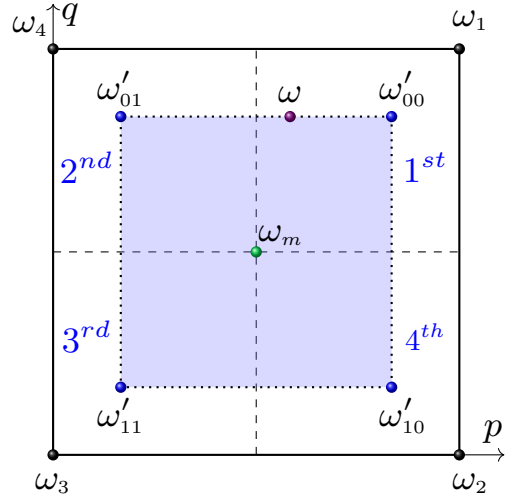
\begin{figure}[t!]
\centering
\begin{tikzpicture}[scale=1.5]
\draw[dashed] (-1.8,0,0)--(1.8,0,0); 
\draw[->,thin] (1.8,-1.8,0)--(2.2,-1.8,0);
\draw[dashed] (0,-1.8,0)--(0,1.8,0);
\draw[->,thin] (-1.8,1.8,0)--(-1.8,2.2,0);
\coordinate (S1) at (1.8,1.8,0);
\coordinate (S2) at (-1.8,1.8,0);
\coordinate (S3) at (-1.8,-1.8,0);
\coordinate (S4) at (1.8,-1.8,0);
\coordinate (A1) at (.3,1.2,0);
\coordinate (A2) at (-.3,1.2,0);
\coordinate (A3) at (-1.2,.3,0);
\coordinate (A4) at (-1.2,-.3,0);
\coordinate (A5) at (-.3,-1.2,0);
\coordinate (A6) at (.3,-1.2,0);
\coordinate (A7) at (1.2,-.3,0);
\coordinate (A8) at (1.2,.3,0);
\coordinate (B1) at (.75,.75,0);
\coordinate (B2) at (-.75,.75,0);
\coordinate (B3) at (-.75,-.75,0);
\coordinate (B4) at (.75,-.75,0);
\coordinate (C1) at (1.2,1.2,0);
\coordinate (C2) at (-1.2,1.2,0);
\coordinate (C3) at (-1.2,-1.2,0);
\coordinate (C4) at (1.2,-1.2,0);
\draw [thick] (S1)--(S2)--(S3)--(S4)--(S1);
\draw [dotted,thick,fill=blue!50, fill opacity=0.3] (C1)--(C2)--(C3)--(C4)--(C1);
\shade[ball color=black] (S1) circle (.05cm);
\shade[ball color=black] (S2) circle (.05cm);
\shade[ball color=black] (S3) circle (.05cm);
\shade[ball color=black] (S4) circle (.05cm);
\shade[ball color=abc] (A1) circle (.05cm);
\shade[ball color=blue] (C1) circle (.05cm);
\shade[ball color=blue] (C2) circle (.05cm);
\shade[ball color=blue] (C3) circle (.05cm);
\shade[ball color=blue] (C4) circle (.05cm);
\shade[ball color=green1] (0,0,0) circle (.05cm);
\node at ($(A1)+(0,0.2,0)$) {\scalebox{1.5}{$\omega$}};
\node at ($(C1)+(.1,0.25,0)$) {\scalebox{1.5}{$\omega^\prime_{\scaleto{00}{3pt}}$}};
\node at ($(C2)+(0,0.25,0)$) {\scalebox{1.5}{$\omega^\prime_{\scaleto{01}{3pt}}$}};
\node at ($(C3)+(0,-0.25,0)$) {\scalebox{1.5}{$\omega^\prime_{\scaleto{11}{3pt}}$}};
\node at ($(C4)+(.1,-0.25,0)$) {\scalebox{1.5}{$\omega^\prime_{\scaleto{10}{3pt}}$}};
\node at ($(S1)+(.1,0.25,0)$) {\scalebox{1.5}{$\omega_{\scaleto{1}{3pt}}$}};
\node at ($(S2)+(-0.2,0.2,0)$) {\scalebox{1.5}{$\omega_{\scaleto{4}{3pt}}$}};
\node at ($(S3)+(0,-0.25,0)$) {\scalebox{1.5}{$\omega_{\scaleto{3}{3pt}}$}};
\node at ($(S4)+(.1,-0.25,0)$) {\scalebox{1.5}{$\omega_{\scaleto{2}{3pt}}$}};
\node at (.22,0.15,0) {\scalebox{1.5}{$\omega_{\scaleto{m}{2pt}}$}};
\node at (2.1,-1.6,0) {\scalebox{1.5}{$p$}};
\node at (-1.65,2.1,0) {\scalebox{1.5}{$q$}};
\node at ($(C1)+(0.3,-0.45,0)$) {\scalebox{1.5}{\textcolor{blue}{$1^{st}$}}};
\node at ($(C2)+(-0.3,-0.45,0)$) {\scalebox{1.5}{\textcolor{blue}{$2^{nd}$}}};
\node at ($(C3)+(-0.3,0.45,0)$) {\scalebox{1.5}{\textcolor{blue}{$3^{rd}$}}};
\node at ($(C4)+(0.3,0.45,0)$) {\scalebox{1.25}{\textcolor{blue}{$4^{th}$}}};
\end{tikzpicture} 
\caption{(Color online) Starting with the state $\omega=(p,1-p|q,1-q)^\intercal$, all the states within the shaded region can be reached under $\mathrm{BiSto}_{\bm\square}$.}
\vspace{-.5cm}
\label{fig5}
\end{figure}

\subsection{Proof of Theorem \ref{theo4}}
\begin{proof}
As already mentioned we can consider an arbitrary initial state $\omega=(p, 1-p | q, 1-q)^\intercal$, with $p,q\in[1/2,1]$. Two possibilities can arise: (i) $q\ge p\ge1/2$, and (ii) $p\ge q\ge1/2$.  

For case (i), achievable set of states under $\mathrm{BiSto}_{\bm\square}$ forms a convex polytope ${\bf S}^\omega_\mathrm{BiSto}\subset{\bf S}_{\bm\square}$, whose extreme points are given by (see Fig.\ref{fig5})
\begin{align*}
\left\{\!\begin{aligned}
\omega^\prime_{00}:=(q, 1-q | q, 1-q)^\intercal,~\omega^\prime_{01}:=(1-q,q | q, 1-q)^\intercal\\
\omega^\prime_{10}:=(q, 1-q | 1-q,q)^\intercal,~\omega^\prime_{11}:=(1-q,q |1-q,q)^\intercal\end{aligned}\right\}. 
\end{align*}
Therefore, any point $\omega^\prime$ lying within the set ${\bf S}^\omega_\mathrm{BiSto}\equiv\mbox{ConvHull}\{\omega^\prime_{00},\omega^\prime_{01},\omega^\prime_{10},\omega^\prime_{11}\}$ can be achieved under $\mathrm{BiSto}_{\bm\square}$ while starting with the state $\omega$. Once again, $\omega^\prime$ can be brought into the form $\omega^\prime=(p^\prime, 1-p^\prime | q^\prime, 1-q^\prime)^\intercal$, with $p^\prime,q^\prime\in[1/2,1]$. Therefore, to lie within the allowed region the following inequalities appear as necessary and sufficient criteria:  
\begin{align}
\max\{p^\prime,q^\prime\}\le q. 
\end{align}
Similarly, analyzing case (ii) we have,
\begin{align}
\max\{p^\prime,q^\prime\}\le p. 
\end{align}
Therefore, in general the state transformation criteria ($\omega\to\omega^\prime$) under $\mathrm{BiSto}_{\bm\square}$ reads as 
\begin{align}\label{eqbi}
\max\{p^\prime,q^\prime\}\le \max\{p,q\}. 
\end{align}
This completes the proof.
\end{proof}

\subsection{Proof of Theorem \ref{theo5}}
\begin{proof}
Starting with the state $\omega=(p, 1-p | q, 1-q)^\intercal$, with $q\ge p\ge1/2$, the set of states achievable under $\mathrm{RaRe}_{\bm\square}$ forms a convex polytope ${\bf S}^\omega_\mathrm{RaRe}\subset{\bf S}_{\bm\square}$, whose extreme points are given by (see Fig.\ref{fig6})
\begin{align*}
\left\{\!\begin{aligned}
\omega^\prime_1:=(p, 1-p | q, 1-q)^\intercal,~\omega^\prime_2:=(q,1-q | p, 1-p)^\intercal\\
\omega^\prime_3:=(q, 1-q | 1-p,p)^\intercal,~\omega^\prime_4:=(p,1-p |1-q,q)^\intercal\\
\omega^\prime_5:=(1-p, p | 1-q,q)^\intercal,~\omega^\prime_6:=(1-q,q |1-p,p)^\intercal\\
\omega^\prime_7:=(1-q, q | p,1-p)^\intercal,~\omega^\prime_8:=(1-p,p |q,1-q)^\intercal
\end{aligned}\right\}. 
\end{align*}
Therefore, any point $\omega^\prime$ lying within the set ${\bf S}^\omega_\mathrm{RaRe}$ can be achieved under $\mathrm{RaRe}_{\bm\square}$ while starting with the state $\omega$. Once again, $\omega^\prime$ can be brought into the form $\omega^\prime=(p^\prime, 1-p^\prime | q^\prime, 1-q^\prime)^\intercal$, with $p^\prime,q^\prime\in[1/2,1]$. Therefore, to lie within the allowed region the following inequalities appear as necessary and sufficient criteria:
\begin{subequations}
\begin{align}
\max\{p^\prime,q^\prime\}\le q,\\
p^\prime+q^\prime\le p+q.
\end{align}
\end{subequations}
Similarly, analyzing the other case ($p\ge q\ge 1/2$), the general state transformation criteria ($\omega\to\omega^\prime$) under $\mathrm{RaRe}_{{\bm\square}}$ read as
\begin{subequations}
\begin{align}
\max\{p^\prime,q^\prime\}&\le \max\{p,q\},\label{eqra1}\\
p^\prime+q^\prime&\le p+q.\label{eqra2}
\end{align}
\end{subequations}
This completes the proof.
\end{proof}
\begin{figure}[t!]
\centering
\begin{tikzpicture}[scale=1.5]
\draw[dashed] (-1.8,0,0)--(1.8,0,0); 
\draw[->,thin] (1.8,-1.8,0)--(2.2,-1.8,0);
\draw[dashed] (0,-1.8,0)--(0,1.8,0);
\draw[->,thin] (-1.8,1.8,0)--(-1.8,2.2,0);
\coordinate (S1) at (1.8,1.8,0);
\coordinate (S2) at (-1.8,1.8,0);
\coordinate (S3) at (-1.8,-1.8,0);
\coordinate (S4) at (1.8,-1.8,0);
\coordinate (A1) at (.3,1.2,0);
\coordinate (A2) at (-.3,1.2,0);
\coordinate (A3) at (-1.2,.3,0);
\coordinate (A4) at (-1.2,-.3,0);
\coordinate (A5) at (-.3,-1.2,0);
\coordinate (A6) at (.3,-1.2,0);
\coordinate (A7) at (1.2,-.3,0);
\coordinate (A8) at (1.2,.3,0);
\draw [thick] (S1)--(S2)--(S3)--(S4)--(S1);
\draw [thick,dashed,red] (-.8,1.2,0)--(.8,1.2,0);
\draw [thick,dashed,red] (1.7,-.2,0)--(-.2,1.7,0);
\draw [thick,dashed,red] (1.2,-.8,0)--(1.2,.8,0);
\draw [thick,dotted,fill=purple!50, fill opacity=0.3] (A1)--(A2)--(A3)--(A4)--(A5)--(A6)--(A7)--(A8)--(A1);
\shade[ball color=black] (S1) circle (.05cm);
\shade[ball color=black] (S2) circle (.05cm);
\shade[ball color=black] (S3) circle (.05cm);
\shade[ball color=black] (S4) circle (.05cm);
\shade[ball color=abc] (A1) circle (.05cm);
\shade[ball color=blue] (A2) circle (.05cm);
\shade[ball color=blue] (A3) circle (.05cm);
\shade[ball color=blue] (A4) circle (.05cm);
\shade[ball color=blue] (A5) circle (.05cm);
\shade[ball color=blue] (A6) circle (.05cm);
\shade[ball color=blue] (A7) circle (.05cm);
\shade[ball color=blue] (A8) circle (.05cm);
\shade[ball color=green1] (0,0,0) circle (.05cm);
\node at ($(A1)+(0.3,0.2,0)$) {\scalebox{1.5}{$\omega^\prime_1=\omega$}};
\node at ($(A8)+(0.2,0.2,0)$) {\scalebox{1.5}{$\omega^\prime_2$}};
\node at ($(A7)+(0.2,-0.1,0)$) {\scalebox{1.5}{$\omega^\prime_3$}};
\node at ($(A6)+(0.1,-0.1,0)$) {\scalebox{1.5}{$\omega^\prime_4$}};
\node at ($(A5)+(-0.15,-0.1,0)$) {\scalebox{1.5}{$\omega^\prime_5$}};
\node at ($(A4)+(-0.2,-0.1,0)$) {\scalebox{1.5}{$\omega^\prime_6$}};
\node at ($(A3)+(-0.15,+0.2,0)$) {\scalebox{1.5}{$\omega^\prime_7$}};
\node at ($(A2)+(-0.15,+0.2,0)$) {\scalebox{1.5}{$\omega^\prime_8$}};
\node at ($(S1)+(.1,0.25,0)$) {\scalebox{1.5}{$\omega_{\scaleto{1}{3pt}}$}};
\node at ($(S2)+(-0.2,0.2,0)$) {\scalebox{1.5}{$\omega_{\scaleto{4}{3pt}}$}};
\node at ($(S3)+(0,-0.25,0)$) {\scalebox{1.5}{$\omega_{\scaleto{3}{3pt}}$}};
\node at ($(S4)+(.1,-0.25,0)$) {\scalebox{1.5}{$\omega_{\scaleto{2}{3pt}}$}};
\node at (.22,0.15,0) {\scalebox{1.5}{$\omega_{\scaleto{m}{2pt}}$}};
\node at (2.1,-1.6,0) {\scalebox{1.5}{$p$}};
\node at (-1.65,2.1,0) {\scalebox{1.5}{$q$}};
\node at ($(C1)+(0.3,-0.45,0)$) {\scalebox{1.5}{\textcolor{blue}{$1^{st}$}}};
\node at ($(C2)+(-0.3,-0.45,0)$) {\scalebox{1.5}{\textcolor{blue}{$2^{nd}$}}};
\node at ($(C3)+(-0.3,0.45,0)$) {\scalebox{1.5}{\textcolor{blue}{$3^{rd}$}}};
\node at ($(C4)+(0.3,0.45,0)$) {\scalebox{1.25}{\textcolor{blue}{$4^{th}$}}};
\end{tikzpicture} 
\caption{(Color online) Starting with the state $\omega=(p,1-p|q,1-q)^\intercal$, all the states within the shaded region can be reached under $\mathrm{RaRe}_{\bm\square}$.}
\vspace{-.5cm}
\label{fig6}
\end{figure}

\subsection{Entropic Monotones}
{\it Bi-stochastic operations}: Defining $\eta:=\max\{p,q\}$ and $\eta^\prime:=\max\{p^\prime,q^\prime\}$, Eq.(\ref{eqbi}) reads as
\begin{align}
\eta^\prime\le\eta.
\end{align}
Accordingly, $\vec{\eta^\prime}\prec\vec\eta$, where $\vec\xi:=(\xi,1-\xi)^\intercal$. Since Shannon entropy is a Schur-concave function \cite{Peajcariaac1992}, therefore we have 
\begin{equation}
H(\vec{\eta}) \leq H(\vec{\eta^\prime}).
\end{equation}

{\it Random reversible operations}: Given the pair of states $\omega=(p,1-p|q,1-q)^\intercal$ ($p,q\ge1/2$) and $\omega^\prime=(p^\prime,1-p^\prime|q^\prime,1-q^\prime)^\intercal$ ($p^\prime,q^\prime\ge1/2$), where $\omega\to\omega^\prime$ under $\mathrm{RaRe}_{\bm\square}$, we can consider two probability vectors
\begin{subequations}
\begin{align}
\vec r&:=\frac{1}{2}(\eta,\delta,1-\delta,1-\eta)^\intercal,~~
\vec{r^\prime}:=\frac{1}{2}(\eta^\prime,\delta^\prime,1-\delta^\prime,1-\eta^\prime)^\intercal,
\end{align}
\end{subequations}
where $\eta:=\max\{p,q\}~\&~\delta:=\min\{p,q\}$ and $\eta^\prime:=\max\{p^\prime,q^\prime\}~\&~\delta^\prime:=\min\{p^\prime,q^\prime\}$. The conditions (\ref{eqra1})-(\ref{eqra2}) lead to the majorization relation $\Vec{r^\prime}\prec\vec r$, and can be re-expressed as
\begin{subequations}
\begin{align}
\eta^\prime&\le\eta,\\
\eta^\prime+\delta^\prime&\le\eta+\delta.
\end{align}
Accordingly, we have the following two monotones
\begin{align}
H(\vec{\eta}) &\leq H(\vec{\eta^\prime}),\\
H(\vec{\eta})+H(\vec{\delta})&\leq H(\vec{\eta^\prime})+H(\vec{\delta^\prime})\nonumber\\
\implies H(\vec{p})+H(\vec{q})&\leq H(\vec{p^\prime})+H(\vec{q^\prime}). 
\end{align}
\end{subequations}

\end{document}